\newtheorem{theorem}{Theorem}[section]
\newtheorem{corollary}[theorem]{Corollary}
\newtheorem{lemma}[theorem]{Lemma}
\newtheorem{proposition}[theorem]{Proposition}
\theoremstyle{definition}
\newtheorem{remark}[theorem]{Remark}
\newcommand{\E}{\mathbb E}
\newcommand{\remove}[1]{}
\DeclareMathOperator{\diam}{diam}
\newcommand{\ent}{\ensuremath {\mathbb Z} }
\newcommand{\dg}{d_{G}}
\newcommand{\de}{d_{E}}
\newcommand{\RG} {\ensuremath{\mathscr G(n,r)}}
\newcommand{\RGuv} {\ensuremath{\widetilde{\mathscr G}_{u,v}(n,r)}}
\newcommand{\RGR} {\ensuremath{\widetilde{\mathscr G}_{R,u,v}(n,r)}}
\newcommand{\SR} {\ensuremath{\mathcal S_n}}
\def\wck#1 {\underline{#1}~\marginpar{\fbox{#1} {\tiny ?}}}
\def\silent#1\par{\par}
\def\text#1{\quad\mbox{#1}\quad}
\renewcommand{\@seccntformat}[1]{\@nameuse{the#1}.\quad}
\begin{document}
% prop1, figures, reprove, 10->4

\title{On the relation between graph distance and Euclidean distance in random geometric graphs}
\thanks{Partially
supported by the  CYCIT: TIN2007-66523 (FORMALISM).}

\author{J.~D\'{i}az}
\address{Universitat Polit\`{e}cnica de Catalunya, Dept. de Llenguatges i Sistemes Inform\`{a}tics,
08034 Barcelona}
\email{\texttt{diaz@lsi.upc.edu}}

\author{D.~Mitsche}
\address{Universit\'{e} de Nice Sophia-Antipolis, Laboratoire J.A. Dieudonn\'{e}, Parc Valrose,
06108 Nice cedex 02}
\email{\texttt{dmitsche@unice.fr}}

\author{G.~Perarnau}
\address{Universitat Polit\`{e}cnica de Catalunya, Dept. de Matem\`{a}tica Aplicada IV, 08034
Barcelona}
\email{\texttt{guillem.perarnau@ma4.upc.edu}}

\author{X.~P\'{e}rez-Gim\'{e}nez}
\address{University of Waterloo, Dept. of Combinatorics and Optimization, Waterloo ON, N2L 3G1}
\email{\texttt{xperez@uwaterloo.ca}}

\keywords {random graphs}
\subjclass {Primary: 05C80}
\date{}

\maketitle
\onehalfspace
\begin{abstract}
%Given a random geometric graph, for any two vertices $u,v$ denote by $d_E(u,v)$ their Euclidean distance and by $d_G(u,v)$ their graph distance. The problem of finding upper bounds for $d_G(u,v)$ in terms of $d_E(u,v)$, has received a lot of attention in the literature~\cite{Bradonjic,Ellis,Friedrich,Muthukrishnan}. In this paper, we improve these upper bounds for values of $r=\omega(\sqrt{\log{n}})$. Our result also improves the best known estimations on the diameter of random geometric graphs. We also provide a meaningful lower bound on $d_G(u,v)$ in terms of $d_E(u,v)$.
Given any two vertices $u,v$ of a random geometric graph, denote by
$d_E(u,v)$ their Euclidean distance and by $d_G(u,v)$ their graph
distance. The problem of finding upper bounds on $d_G(u,v)$ in terms
of $d_E(u,v)$ has received a lot of attention in the
literature~\cite{Bradonjic,Ellis,Friedrich,Muthukrishnan}. In this
paper, we improve these upper bounds for values of
$r=\omega(\sqrt{\log{n}})$ (i.e.~for $r$ above the connectivity threshold). Our result also improves the best known
estimates on the diameter of random geometric graphs. We also
provide a lower bound on $d_G(u,v)$ in terms of $d_E(u,v)$.
\end{abstract}
{\small\textbf{Keywords:} Random geometric graphs, Graph distance, Euclidean distance, Diameter.}

\section{Introduction}\label{ap:sec:intro}
Given a positive integer $n$, and a non-negative real $r$, we consider a random geometric graph
$G\in\RG$ defined as follows.
The vertex set $V$ of $G$ is obtained by choosing $n$ points independently and uniformly at random
in the square $\SR = \left[-\sqrt{n}/2,\sqrt{n}/2\right]^2$ (Note that, with probability $1$, no point in
$\SR$ is chosen more than once, and thus we assume $|V|=n$).
For notational purposes, we identify
each vertex $v \in V$ with its corresponding geometric position $v=(v_x,v_y)\in\SR$, where $v_x$ and $v_y$
denote the usual $x$- and $y$-coordinates in $\SR$. Finally, the edge set of $G\in\RG$ is constructed by
connecting each pair of vertices $u$ and $v$ by an edge if and only if $d_E(u,v)\le r$, where $d_E$
denotes the Euclidean distance in $\SR$.

Random geometric graphs were first introduced in a slightly different setting by
Gilbert~\cite{Gilbert} to model the communications between radio stations. Since then, several
closely related variants on these graphs have been widely used as a model for wireless
communication, and have also been extensively  studied from a mathematical point of view. The basic
reference on random geometric graphs is the monograph by Penrose~\cite{Penrose}.

The properties of $\RG$ are usually investigated from an asymptotic perspective, as $n$ grows to
infinity and $r=r(n)$. Throughout the paper, we use the following standard notation for
the asymptotic behavior of sequences of non-negative numbers $a_n$ and $b_n$: $a_n=O(b_n)$ if
$\limsup_{n\to\infty}a_n/b_n\leq C<+\infty$;  $a_n=\Omega(b_n)$ if $b_n=O(a_n)$;
$a_n=\Theta(b_n)$ if $a_n=O(b_n)$ and $a_n=\Omega(b_n)$; $a_n=o(b_n)$ if
$\lim_{n\to\infty} a_n/b_n = 0$. Finally,  a sequence of events $H_n$ holds \emph{asymptotically
almost surely} (a.a.s.) if $\lim_{n\to\infty}\Pr(H_n)=1$.

It is well known that  $r_c=\sqrt{\log n/\pi}$ is a sharp threshold function for the connectivity of
a random geometric graph (see~e.g.~\cite{Penrose97,Goel05}). This means that for every $\varepsilon>0$, if
$r\le(1-\varepsilon)r_c$, then $\RG$ is a.a.s.\ disconnected, whilst if
$r\ge(1+\varepsilon)r_c$, then it is a.a.s.\ connected.
In order to ensure that we have a connected random geometric graph, we assume in the following that
$r \geq r_c$.

Given a connected graph $G$, we define the {\em graph distance} between two vertices $u$ and $v$, denoted by
$d_G(u,v)$, as the number of edges on a shortest path from $u$ to $v$. Observe first that any pair of vertices $u$ and $v$ must satisfy $d_G(u,v)\geq d_E(u,v)/r$ deterministically, since each edge of a geometric graph has length at most $r$. The goal of this paper is to provide upper and lower bounds that hold a.a.s.\ for the graph distance of two vertices in terms of their Euclidean distance and in terms of $r$ (see Figure~\ref{ap:fig:distxy2}).
\vspace{0.3cm}

\emph{Related work.} The particular problem has risen quite a bit of interest in recent years. Given any two $v,u\in V$, most of the work related to this problem has been devoted to study upper bounds on $d_G(u,v)$ in terms of $d_E(u,v)$ and $r$, that hold a.a.s. Ellis, Martin and Yan~\cite{Ellis} showed that there exists some large constant $K$ such that for every $r\geq r_c$, $G\in \RG$ satisfies a.a.s.\ $d_G(u,v)\leq K d_E(u,v)/r$ for every $u$ and $v$\footnote{The result is stated in the unit ball random geometric graph model, but can be adapted to our setting.}. This result was extended by Bradonjic et al.~\cite{Bradonjic} for the range of $r$ for which  $\RG$ has a giant component a.a.s., under the extra condition that $d_E(u,v)=\Omega(\log^{7/2}n/r^2)$. Friedrich, Sauerwald and Stauffer~\cite{Friedrich} improved this last result by showing that the result holds a.a.s.\ for every $u$ and $v$ satisfying $d_E(u,v)=\omega(\log{n}/r)$. They also proved that if $r=o(r_c)$,  a linear upper bound of $d_G(u,v)$ in terms of $d_E(u,v)$ is no longer possible. In particular, a.a.s.\ there exist vertices $u$ and $v$ with $d_E(u,v)\leq 3r$ and $d_G(u,v)=\Omega(\log{n}/r^2)$.

The motivation for the study of this problem stems from the fact that these results provide upper bounds for the diameter of $G\in \RG$, denoted by $\diam(G)$, that hold a.a.s., and the runtime complexity of many algorithms can often be bounded from above in terms of the diameter of $G$. For a concrete example, we refer to the problem of broadcasting information (see~\cite{Bradonjic,Friedrich}).

One of the important achievements of our paper is to show that $K=1+o(1)$ a.a.s., provided that $r=\omega(r_c)$. By the result in~\cite{Friedrich}, we know that such a result is false if $r=o(r_c)$.

A similar problem has been studied by Muthukrishnan and Pandurangan~\cite{Muthukrishnan}. They proposed a new technique to study several problems on random geometric graphs --- the so called  \emph{Bin-Covering} technique --- which tries to cover the endpoints of a path by bins. They consider, among others, the problem of determining $D_G(u,v)$, which is the length of the shortest Euclidean path connecting $u$ and $v$.
%(i.e.~the smallest distance when summing the lengths of all edges on the path, so $D_G(u,v)\le rd_e(u,v)$). Muthukrishnan and Pandurangan showed that, if $r =\omega(r_c)$, then $D_G(u,v)\leq d_E(u,v)+o(1)$; and also, if $r=\Theta(r_c)$ and at the same time $r\geq Cr_c$ for some large constant $C > 0$, then $D_G(u,v)\leq d_E(u,v)+\Theta(1)$.  In their proof, they construct consecutive points that are connected in the graph and which are at distance at least $r/2$ from each other. Hence, from their argument,  it is easy to deduce that $K\leq 2$, provided that $r$ is large enough (see Theorem~3.3 in \cite{Muthukrishnan}). The Bin-Covering technique is closely related to the proofs we give in this article.
Recently, Mehrabian and Wormald~\cite{Abbas} studied a similar problem to the one in~\cite{Muthukrishnan}. They deploy $n$ points uniformly in $[0,1]^2$, and connect any pair of points with probability $p=p(n)$, independently of their distance. Mehrabian et al. determine the ratio of $D_G(u,v)$ and $d_E(u,v)$ as a function of $p$.

%
%Here, we will show that there exists $\varphi(r,d_E)$ and $\varphi'(r,d_E)$, with $\varphi(r,d_E),\varphi'(r,d_E)=o(1)$, such that a.a.s.\ we have
%$$
%\frac{d_E(u,v)}{r}(1+\varphi(r,d_E))\le d_G(u,v)\le \frac{d_E(u,v)}{r}(1+\varphi'(r,d_E))\;,
%$$
%provided that $d_E(u,v)$ is large enough with respect to $r$.

%If $d_E(u,v)$ is small with respect to $r$, then, these results may no longer be true.

\begin{figure}[t]
\begin{center}
 \scalebox{.6}{\includegraphics{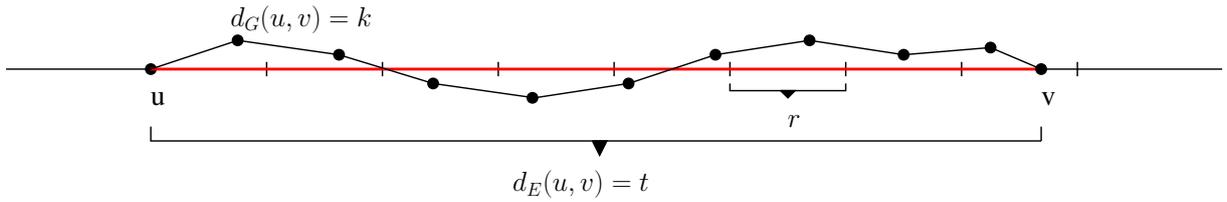}}
\end{center}
\caption{Graph distance vs. Euclidean distance between two points $u$ and $v$ in
$V$}\label{ap:fig:distxy2}
\end{figure}

The following theorem is the main result of the paper.
\begin{theorem}\label{ap:thm:main}
%A.a.s., the following events hold in $\RG$ as $n \to \infty$.
Let $G\in\RG$ be a random geometric graph with $r \geq r_c$. A.a.s., for every pair of vertices $u,v\in V(G)$ we have:
\begin{enumerate}
\item if $d_E(u,v) \geq 20 r \log n$, then $\displaystyle d_G(u,v) \geq \frac{d_E(u,v)}{r}\left(1 + \frac{1}{2\left(r d_E(u,v)\right)^{2/3}}\right)$, and
\item if $r\geq 70\sqrt{ \log{n}}$, then $\displaystyle d_G(u,v) \leq \left\lceil \frac{d_E(u,v)}{r}\left(1+\gamma r^{-4/3}\right) \right\rceil$
\end{enumerate}
where
$$\gamma=\max\left\{31\left(\frac{2r\log{n}}{r+d_E(u,v)}\right)^{2/3},\;\frac{70\log^2{n} }{r^{8/3}},\; 300^{2/3}\right\}\;.$$
%\xavcom{he posat $r+d_E(u,v)$ en un denominador. Fa la $\gamma$ m\'es petita quan $d_E(u,v)\ll r$ i ho necessitem despr\'es}
%
\end{theorem}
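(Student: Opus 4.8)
\medskip\noindent\textbf{Proof strategy.}

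\emph{Part (i), lower bound, via a first moment.} Any path with $k$ edges has total Euclidean length at most $kr$, so if some pair $u,v$ with $d_E(u,v)\ge 20r\log n$ violated the bound, there would exist an ordered tuple of distinct vertices $(w_0,w_1,\dots,w_k)$ with $d_E(w_{i-1},w_i)\le r$ for $1\le i\le k$, with $k<\tfrac{d_E(w_0,w_k)}{r}\bigl(1+\tfrac1{2(r\,d_E(w_0,w_k))^{2/3}}\bigr)$, and with $d_E(w_0,w_k)\ge 20r\log n$. The middle condition is equivalent to $d_E(w_0,w_k)\ge D_k$, where $D_k$ solves $k=\tfrac{D_k}{r}(1+\tfrac1{2(rD_k)^{2/3}})$, and then $kr-D_k=\tfrac{D_k^{1/3}}{2r^{2/3}}\le \tfrac{k^{1/3}}{2r^{1/3}}$. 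Since the $n$ vertex positions are i.i.d.\ uniform in $\SR$, a change of variables ($X_i=w_i-w_{i-1}$) shows that the expected number of such tuples is at most $n\,U_k$, where $U_k=(\pi r^2)^k\,\pr\bigl[\,|X_1+\cdots+X_k|\ge \max(D_k,20r\log n)\,\bigr]$ and $X_1,\dots,X_k$ are i.i.d.\ uniform in the disc $B(0,r)$ (the factor $n$ is for the position of $w_0$).

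\emph{Part (i), the key estimate.} The crux is a large-deviation bound for $|X_1+\cdots+X_k|$ near its deterministic maximum $kr$: the target value lies within $O(k^{1/3}r^{-1/3})$ of $kr$. Being this close forces \emph{every} step $X_i$ to be almost of length $r$ \emph{and} almost parallel to the common direction $e$ of $\sum_iX_i$. I would cover the possible directions $e$ by a net of $O(r^{4/3}k^{2/3})$ unit vectors and, for each fixed net vector $e_0$, apply a Chernoff bound to the i.i.d.\ nonnegative variables $Z_i=r-\langle X_i,e_0\rangle$; the geometric input is that a circular segment of $B(0,r)$ of height $t$ has area of order $\sqrt r\,t^{3/2}$, so $Z_i$ has density of order $t^{1/2}r^{-3/2}$ near $0$ and $\E[e^{-\lambda Z_i}]$ is of order $(r\lambda)^{-3/2}$ for $\lambda\gg r^{-1}$. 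Optimising $\lambda$ gives $\pr[|X_1+\cdots+X_k|\text{ near }kr]\le\operatorname{poly}(r,k)\,(C_1/(r^2k))^k$ for an absolute constant $C_1$, hence $U_k\le\operatorname{poly}(r,k)\,(\pi C_1/k)^k$. Any bad tuple has $k\ge d_E(w_0,w_k)/r\ge 20\log n$, so $U_k=n^{-\omega(1)}$; summing $n\,U_k$ over the $O(\sqrt n/r)$ relevant values of $k$ gives $o(1)$, so a.a.s.\ no bad tuple exists. The main obstacle is getting this tail bound with the right exponent: projecting onto a single fixed direction loses a constant multiple of the target, which is fatal since the whole slack is an $o(1)$ fraction of it, so the direction net and the precise small-$t$ behaviour of $Z_i$ are essential. (The same scheme in fact yields the stronger lower bound with $r^{-4/3}$ in place of $\tfrac1{2(r\,d_E)^{2/3}}$.)

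\emph{Part (ii), upper bound, via a greedy path.} Fix $u,v$, put $d=d_E(u,v)$, and set $w_0=u$; while $d_E(w_j,v)>r$, let $w_{j+1}$ be a neighbour of $w_j$ minimising $d_E(\cdot,v)$, and finish with a single edge to $v$ once $d_E(w_j,v)\le r$. Writing $\eta_j:=r-\bigl(d_E(w_j,v)-d_E(w_{j+1},v)\bigr)\ge 0$ for the \emph{waste} at step $j$, the recursion $d_E(w_{j+1},v)=d_E(w_j,v)-r+\eta_j$ telescopes to $d_G(u,v)\le\bigl\lceil (d+\sum_j\eta_j)/r\bigr\rceil$, so it suffices to prove $\sum_j\eta_j\le d\gamma/r^{4/3}$. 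The geometry behind $\eta_j$ is that $\eta_j<h$ iff the circular segment of $B(w_j,r)$ of height $h$ pointing towards $v$ contains a vertex, and this segment has area of order $\sqrt r\,h^{3/2}$; the hypothesis $r\ge 70\sqrt{\log n}$ ensures (a.a.s., by a crude union bound) that every forward half-disc is nonempty, so the walk never gets stuck and terminates in $O(d/r)$ steps.

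\emph{Part (ii), bounding the total waste.} Two regimes. When $d/r$ is not large, bound each $\eta_j$ by the worst-case height $h_0$ for which \emph{all} segments — over the $O(n^2)$ choices of a vertex and a target direction — are nonempty; this needs $\sqrt r\,h_0^{3/2}=\Theta(\log n)$, i.e.\ $h_0=\Theta((\log n)^{2/3}r^{-1/3})$, producing the $\log n$-type contributions to $\gamma$ (the term $70\log^2n/r^{8/3}$ and the large-$\log n$ part of $31(2r\log n/(r+d))^{2/3}$). When $d/r$ is large this is far too wasteful, and instead one shows $\sum_j\eta_j$ concentrates around its mean, which is of order $(d/r)\cdot r^{-1/3}=d\,r^{-4/3}$ — this yields the absolute-constant term $300^{2/3}$ — the remaining part of $31(2r\log n/(r+d))^{2/3}$ being the width of the concentration window, obtained from the sub-exponential tail $\pr[\eta_j>h]\approx e^{-\Theta(\sqrt r\,h^{3/2})}$ combined with the $n^2$-pair union bound. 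The principal obstacles are that the greedy walk is adaptive, so the concentration must be run as a Doob/Azuma martingale on the point set revealed disc-by-disc along the walk (bounded increments coming from each $\eta_j$ depending only on a bounded local exploration), and that the construction must succeed simultaneously for all $\binom n2$ pairs $u,v$ — which is exactly what forces the $\log n$ factors and the precise constants in $\gamma$.
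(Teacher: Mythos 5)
Your plan for part (i) is a genuinely different route from the paper's. The paper first confines every short $u$--$v$ path to a thin rectangle $R=[0,t]\times[-\alpha,\alpha]$ (Lemma~\ref{ap:lem:strip}), then greedily tracks the rightmost reachable $x$-coordinate and dominates the per-step shortfalls by i.i.d.\ exponentials of parameter $2\alpha$, so that only the elementary Lemma~\ref{ap:lem:concentration} is needed. You instead run a first moment over all $k$-step vertex sequences, reducing everything to the local large-deviation estimate $\Pr[|X_1+\cdots+X_k|\ge kr-s]$ for i.i.d.\ $X_i$ uniform on $B(0,r)$; your accounting ($n\,U_k$ with $U_k=(\pi r^2)^k\Pr[\cdot]$, the direction net of polynomial size, and the $\Theta(t^{1/2}r^{-3/2})$ density of $Z_i=r-\langle X_i,e_0\rangle$ near $0$ coming from the circular-segment area) is correct, and the resulting bound $(\mathrm{const}/k)^k$ with $k\ge 20\log n$ is superpolynomially small, so the scheme closes. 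Indeed, as you note, it yields the sharper $\Theta(r^{-4/3})$ correction, strengthening statement (i); the price is a considerably more delicate estimate (Laplace transforms plus a net) where the paper gets by with exponential tails. For part (ii) you are in essence doing what the paper does -- greedy advancement toward $v$ plus concentration of the accumulated waste, split into the same three regimes that produce the three terms of $\gamma$ -- but you build the path intrinsically via forward circular segments, whereas the paper pre-fixes a strip $[0,t]\times[0,\alpha]$ with $\alpha=B\delta^{1/2}r^{1/3}$ and uses rectangles of width $\rho=r-\alpha^2/r$ (Lemma~\ref{ap:lem:rectangle}) so that consecutive chosen vertices are automatically adjacent and the shortfalls are again exponential.

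Two points in part (ii) need more care than your sketch allows. First, the forward region is not a circular segment but the lens $B(w_j,r)\cap B\big(v,d_E(w_j,v)-r+h\big)$; its area is $\Theta(\sqrt r\,h^{3/2})$ only while $d_E(w_j,v)-r=\Omega(r)$, and it degenerates at the end of the walk (e.g.\ when $d_E(u,v)$ is slightly below $2r$ the lens $B(u,r)\cap B(v,r)$ has area only $\Theta(\sqrt r\,\eps^{3/2})$ with $\eps=2r-d_E(u,v)$); this endgame is exactly what the first term of $\gamma$ must absorb, and the paper's rectangle construction sidesteps it entirely. Second, Azuma with ``bounded increments'' is not the right tool, since the $\eta_j$ are unbounded; the correct observation is that the successive lenses lie in disjoint annuli around $v$ (each step reveals only a region with $d_E(\cdot,v)\ge d_E(w_{j+1},v)$, while all later lenses live strictly below that level), so after Poissonizing -- a step you omit but which the paper uses precisely to make the disjoint-region counts independent -- the $\eta_j$ are dominated by independent variables and a Chernoff bound in the style of Lemma~\ref{ap:lem:concentration} (followed by de-Poissonization and the union bound over pairs) finishes the argument.
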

In order to prove (i), we first observe that all the short paths between two points must lie in a certain rectangle. Then we show that, by restricting the construction of the path on that rectangle, no very short path exists.
For the proof of (ii) we proceed similarly. We restrict our problem to finding a path contained in a narrow strip. In this case, we show that a relatively short path can be constructed.
We believe that the ideas in the proof can be easily extended to show the analogous result for $d$-dimensional random geometric graphs for all fixed $d\geq 2$.

\begin{remark}
(1)
We do not know if the condition $d_E(u,v) \geq 20 r \log n$ in the lower bound in (i) can be improved.
(2)
The constant $70$ in the condition $r\ge70\sqrt{\log n}$ of (ii) (as well as those in the definition of $\gamma$) is not optimized, and could be made slightly smaller. However, our method as it is, cannot be extended all the way down to $r\ge\sqrt{\log n/\pi}=r_c$.
%\xavcom{ho he comprovat i \'es impossible baixar gaire m\'es. De fet, hi ha raons essencials per a que aix\`o sigui aix\'\i.}
(3)
The error term in~(ii) is
\[
\gamma r^{-4/3} = \Theta \left( \max \left\{ \left(\frac{\log{n}}{r^2+rd_E(u,v)}\right)^{2/3},\; \left(\frac{\sqrt{\log{n}} }{r}\right)^4,\; r^{-4/3}\right\} \right),
\]
which is $o(1)$ iff $r=\omega(\sqrt{\log n})=\omega(r_c)$.
Hence, for $r=\omega(r_c)$, statement (ii) implies that a.a.s.
$$
d_G(u,v) \leq \left\lceil \left(1+o(1)\right) \frac{d_E(u,v)}{r}\right\rceil\;,
$$
thus improving the result in~\cite{Ellis}.
\end{remark}

\vspace{0.3cm}

Theorem~\ref{ap:thm:main} gives an upper bound on the diameter as a corollary.
First, observe that $d_E(u,v)\leq \sqrt{2n}$. From Theorem~10 in~\cite{Ellis} for the particular case $d=2$, one can easily deduce that if $r\geq r_c$ a.a.s.
\begin{equation}\label{ap:eq:diam1}
\diam(G)\leq \frac{\sqrt{2n}}{r}\left(1+O\left(\sqrt{\frac{\log\log{n}}{\log{n}}}\right)\right)\;.
\end{equation}
Moreover, note that a.a.s.\ there exist $u$ and $v$ at distance $d_E(u,v)\geq \sqrt{2n}-2\sqrt{2}\log{n}$; the probability that the squares of side $\log{n}$ at the corners of $\SR$ contain no vertices is $o(1)$. Applying Theorem~\ref{ap:thm:main} to these vertices $u$ and $v$, we obtain the following result.
\begin{corollary}\label{ap:cor:diam}
Let $G\in\RG$ be a random geometric graph  with $r\geq 70\sqrt{\log{n}}$. A.a.s.\ we have
$$
\diam(G)\leq \frac{\sqrt{2n}}{r}\left(1+\gamma r^{-4/3}\right)\;,
$$
where $\gamma$ is defined as in Theorem~\ref{ap:thm:main}.
\end{corollary}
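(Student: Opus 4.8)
The plan is to read the diameter bound straight off Theorem~\ref{ap:thm:main}(ii), optimizing over the Euclidean distance between the two endpoints. First I would record the two elementary facts stated just before the corollary: every pair $u,v\in V$ has $d_E(u,v)\le\diam(\SR)=\sqrt{2n}$ deterministically, and a.a.s.\ there is a pair $u_0,v_0$ with $d_E(u_0,v_0)\ge\sqrt{2n}-2\sqrt{2}\log n$, since each of the two opposite corner squares of $\SR$ of side $\log n$ contains a vertex with probability $1-o(1)$. Next I would condition on the a.a.s.\ event of Theorem~\ref{ap:thm:main}(ii) (legitimate because $r\ge70\sqrt{\log n}$): on that event every pair $u,v$ obeys $d_G(u,v)\le\lceil f(d_E(u,v))\rceil$, where $f(d):=\tfrac dr\bigl(1+\gamma(d)\,r^{-4/3}\bigr)$ and $\gamma(d)$ is the quantity $\gamma$ of the theorem with $d$ in place of $d_E(u,v)$.

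The crux is to show that $f$ is non-decreasing on $(0,\infty)$, so that the pair of maximum Euclidean distance dominates all others. Write $f(d)=\tfrac dr+r^{-7/3}\max\{d\,A(d),\,d\,B\}$ with $A(d):=31\bigl(2r\log n/(r+d)\bigr)^{2/3}$ and $B:=\max\{70\log^2 n/r^{8/3},\,300^{2/3}\}$ independent of $d$. The maps $d\mapsto\tfrac dr$ and $d\mapsto d\,B$ are increasing, and $d\,A(d)$ is a positive constant times $d(r+d)^{-2/3}$, whose derivative $(r+d)^{-5/3}\bigl(r+\tfrac d3\bigr)$ is positive; a maximum of non-decreasing functions being non-decreasing, $f$ is non-decreasing. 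Hence $\diam(G)=\max_{u,v}d_G(u,v)\le\bigl\lceil f(\sqrt{2n})\bigr\rceil$, and $f(\sqrt{2n})=\tfrac{\sqrt{2n}}{r}\bigl(1+\gamma\,r^{-4/3}\bigr)$ with $\gamma$ exactly the quantity in the corollary (using any smaller value of $d_E$ in the definition of $\gamma$ would only enlarge this bound).

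It remains to shed the ceiling, i.e.\ to absorb the additive $+1$ into the error term $\tfrac{\sqrt{2n}}{r}\gamma r^{-4/3}$; this is the only slightly delicate point. Here the far-apart pair enters: $\diam(G)\ge d_G(u_0,v_0)\ge d_E(u_0,v_0)/r\ge(\sqrt{2n}-2\sqrt{2}\log n)/r$, which, together with $\diam(G)\le f(\sqrt{2n})+1$, pins the integer $\diam(G)$ to within an additive $O(\log n/r)$ of $\sqrt{2n}/r$. Whenever $\tfrac{\sqrt{2n}}{r}\gamma r^{-4/3}\ge1$ the rounding is swallowed outright; otherwise one matches $\diam(G)$ against this two-sided estimate to recover the stated bound.

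I expect the monotonicity of $f$ to be the only step requiring a moment's thought — it is precisely what reduces a statement about all pairs to the single diametral pair — with everything else being bookkeeping; the one mild nuisance is reconciling the ceiling in Theorem~\ref{ap:thm:main}(ii) with the ceiling-free form of the corollary, which is why a genuinely far-apart pair is invoked rather than merely the inequality $d_E\le\sqrt{2n}$.
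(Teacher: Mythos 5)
Your route is the paper's own: the entire proof given in the paper consists of the two observations you open with ($d_E(u,v)\le\sqrt{2n}$ for all pairs, and a.a.s.\ a pair at distance at least $\sqrt{2n}-2\sqrt{2}\log n$ because the corner squares of side $\log n$ are a.a.s.\ nonempty), followed by ``apply Theorem~\ref{ap:thm:main}''. Your verification that $d\mapsto\frac{d}{r}\bigl(1+\gamma(d)r^{-4/3}\bigr)$ is non-decreasing is correct (the derivative $(r+d)^{-5/3}(r+d/3)$ of $d(r+d)^{-2/3}$ is indeed positive) and is a step the paper leaves implicit but genuinely needs, since $\gamma$ depends on $d_E(u,v)$. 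Up to the conclusion $\diam(G)\le\bigl\lceil f(\sqrt{2n})\bigr\rceil$ your argument is sound and matches the paper.

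The gap is the ceiling removal, and it is not merely presentational. Your claim that the rounding is ``swallowed outright'' when $\frac{\sqrt{2n}}{r}\gamma r^{-4/3}\ge1$ is false: from $\diam(G)\le\bigl\lceil f(\sqrt{2n})\bigr\rceil<f(\sqrt{2n})+1$ and an error term $E=\frac{\sqrt{2n}}{r}\gamma r^{-4/3}\ge 1$ you may conclude $\diam(G)\le\frac{\sqrt{2n}}{r}\bigl(1+2\gamma r^{-4/3}\bigr)$, but not the stated bound --- an error term of size at least $1$ can absorb one extra copy of itself, not an extra unit on top of itself. The ``otherwise'' branch is not an argument at all, and in that regime none can exist: take $r=n^{0.3}$, so that $\gamma=300^{2/3}$ and $E=O(n^{-0.2})$. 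The ceiling-free bound then asserts $\diam(G)\le\sqrt{2}\,n^{0.2}+o(1)$, i.e.\ $\diam(G)\le\lfloor\sqrt{2}\,n^{0.2}\rfloor$ for typical $n$, whereas your own lower bound gives $\diam(G)\ge\bigl\lceil(\sqrt{2n}-2\sqrt{2}\log n)/r\bigr\rceil=\bigl\lceil\sqrt{2}\,n^{0.2}-o(1)\bigr\rceil$, which for typical $n$ equals $\lfloor\sqrt{2}\,n^{0.2}\rfloor+1$. So the ceiling-free statement does not follow from the theorem and indeed fails for such $r$ for most $n$. To be fair, the paper commits the same sin by silently dropping the ceiling; the defensible conclusions are $\diam(G)\le\bigl\lceil\frac{\sqrt{2n}}{r}(1+\gamma r^{-4/3})\bigr\rceil$ in general, or the ceiling-free bound with $2\gamma$ in place of $\gamma$ when $E\ge1$ (which covers $r=O(n^{3/14})$, hence the whole range where the corollary improves on~\eqref{ap:eq:diam1}). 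You should state one of these rather than assert that the $+1$ disappears.
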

Observe that our bound on the diameter stated in Corollary~\ref{ap:cor:diam} improves the one in~\eqref{ap:eq:diam1} derived from~\cite{Ellis} whenever
%$r=\Omega\left((\log n /\log\log n)^{8/3}\right)$.
$r=\Omega\Big(\dfrac{\log^{5/8}n}{(\log\log n)^{1/8}}\Big)$.
%\xavcom{He tornat a posar el d'abans, que segueix sent el bo.}

%\jd{Move open problem to the end, not too strong on this one so it should be moved back here}

%If the conjecture was true, together with the observation that a.a.s.\ there exist $u$ and $v$ at distance %$d_E(u,v)\geq \sqrt{2n}-2\log{n}$, it would imply that a.a.s.\
%$
%\diam(G)= \frac{\sqrt{2n}}{r}\left(1+\Theta\left(\frac{\gamma}{r^{2/3} n^{1/3}}\right)\right),
%$
%and the upper bound given by Corollary~\ref{ap:cor:diam} would be tight.

%\guillem{Aqui tinc un problema. Si aixo fos cert no contradiuriem el resultat de Ellis per $r=O\Big(\dfrac{\log^{5/8}n}{(\log\log n)^{1/8}}\Big)$? Tot i aixo penso que la conjectura encara hauria de ser certa, almenys fins a $r=\omega(r_c)$. Que en penseu vosaltres?}\dieter{per aixo ja no tenim conjectura, nomes pregunta}

%------------------------------------------------------------------------------
%
\section{Proof of Theorem~\ref{ap:thm:main}}\label{ap:sec:main}
In order to simplify the proof of Theorem~\ref{ap:thm:main} we will make use of a technique known as de-Poissonization, which has many applications in geometric probability (see~\cite{Penrose} for a detailed account of the subject). Here we sketch it.

Consider the following related model of a random geometric graph given vertices $u$ and $v$. Let $V=\{u,v\}\cup V'$, where $V'$ is a set obtained as a homogeneous Poisson point process of intensity $1$ in the square $\SR$ of area $n$. In other words, $V'$ consists of $N$ points in the square $\mathcal S_n$ chosen independently and uniformly at random, where $N$ is a Poisson random variable of mean $n$. We add two labelled vertices $u$ and $v$, whose position is also selected independently and uniformly at random in $\SR$. Exactly as we did for the model $\RG$, we connect by an edge $u$ and $v$ in $V$ if $d_G(u,v)\leq r$. We denote this new model by $\RGuv$.
%Using the standard technique of de-Poissonization we know that any event holding in $\RGuv$ with probability at least $1-o(f_n)$ also it must hold in $\RG$ with probability at least $1-o(f_n \sqrt n)$.

The main advantage of defining $V'=V\setminus\{u,v\}$ as a Poisson point process is motivated by the following two properties:
the number of points of $V'$ that lie in any region $A\subseteq\SR$ of area $a$ has a Poisson distribution with mean $a$; and the number of points of $V'$ in disjoint regions of $\SR$ are independently distributed.
Moreover,  by conditioning $\RGuv$ upon the event $N=n-2$, we recover the original distribution of $\RG$.
Therefore, since $\Pr(N=n-2)=\Theta(1/\sqrt n)$,
any event holding in $\RGuv$ with probability at least $1-o(f_n)$ must hold in $\RG$ with probability at least $1-o(f_n \sqrt n)$.
We make use of this property throughout the article, and do all the analysis for a graph  $G\in \RGuv$.
%

%\subsection{Proof of statement (i)}\label{ap:ssec:ii}

% Let $k$ be the largest integer satisfying $k(r -(k^2r)^{-1/3}/2) \le t=d_E(u,v)$ and suppose
% $k\ge27\log n$.
% We want to prove that $d_G(u,v)\geq k+1$ with probability $1-o(n^{-5/2})$ in $\RGuv$. Hence, we will
% show that there exist no paths of length $k$ connecting $u$ and $v$ with sufficiently high
% probability.
% Throughout the whole subsection fix $\alpha=\delta^{1/2}(k^2 r)^{1/3}/2=o(kr)$.

We will need the following concentration inequality for the sum of independently and identically distributed exponential random variables. For the sake of completeness we provide the proof here.
\begin{lemma}\label{ap:lem:concentration}
 Let $X_1,\dots,X_N$ be independent exponential random variables and let
$X=X_1+\dots+X_N$. Then, for every $\delta > 0$ we have
$$
\Pr(X\ge (1+\delta)\E(X)) \le \left( \frac{1+\delta}{e^{\delta}} \right)^{N},
$$
and for any $0<\delta<1$ we have
$$
\Pr(X\le (1-\delta)\E(X)) \le \left( (1-\delta)e^{\delta} \right)^{N}.
$$
%In particular, for $0 < \delta \leq 3$,
%$$
%\Pr(X\ge (1+\delta)\E(X)) \le e^{-\delta^2 N/41},
%$$
%and for $\delta > 3$,
%$$
%\Pr(X\ge (1+\delta)\E(X)) \le e^{-\delta N/2}.
%$$
\end{lemma}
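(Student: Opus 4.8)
The plan is to run the standard Chernoff--Cram\'er moment generating function argument. These exponentials should be read as independent and identically distributed (this is what makes the exponent $N$ meaningful), and since rescaling all of them by a common positive constant leaves both sides of each inequality unchanged, I may assume without loss of generality that each $X_i$ has rate $1$. Then $\E(X)=N$, and for every $t<1$ the moment generating function factorises as $\E(e^{tX})=\prod_{i=1}^N\E(e^{tX_i})=(1-t)^{-N}$, while for every $t>0$ we have $\E(e^{-tX})=(1+t)^{-N}$.

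For the upper tail, fix $0<t<1$ and apply Markov's inequality to $e^{tX}$:
\[
\Pr\bigl(X\ge(1+\delta)N\bigr)=\Pr\bigl(e^{tX}\ge e^{t(1+\delta)N}\bigr)\le e^{-t(1+\delta)N}(1-t)^{-N}=\left(\frac{e^{-t(1+\delta)}}{1-t}\right)^{N}.
\]
It then remains to choose $t$ to minimise the base. Differentiating $-t(1+\delta)-\log(1-t)$ in $t$ and setting the result to zero gives $t=\delta/(1+\delta)$, which lies in $(0,1)$ for every $\delta>0$; substituting this back collapses the base to $(1+\delta)e^{-\delta}$, which is exactly the claimed bound.

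For the lower tail, fix $t>0$ and apply the same idea to $e^{-tX}$:
\[
\Pr\bigl(X\le(1-\delta)N\bigr)=\Pr\bigl(e^{-tX}\ge e^{-t(1-\delta)N}\bigr)\le e^{t(1-\delta)N}(1+t)^{-N}=\left(\frac{e^{t(1-\delta)}}{1+t}\right)^{N}.
\]
Minimising $t(1-\delta)-\log(1+t)$ over $t>0$ yields the optimal choice $t=\delta/(1-\delta)$ (here is precisely where the hypothesis $0<\delta<1$ is used, to guarantee $t>0$), and inserting it gives the base $(1-\delta)e^{\delta}$, as desired.

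There is no real obstacle here: this is the textbook exponential tail bound, and the only points worth checking are bookkeeping --- namely that the optimising $t$ stays inside the domain on which the relevant moment generating function is finite ($t<1$ for the upper tail, which holds since $\delta/(1+\delta)<1$; no constraint for the lower tail), and that the lower-tail optimiser is positive, which is why the statement restricts to $\delta<1$. One could also avoid the i.i.d.\ reduction entirely by carrying the common rate $\lambda$ through the computation, but it only clutters the algebra.
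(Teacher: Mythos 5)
Your proof is correct and follows essentially the same route as the paper: Markov's inequality applied to $e^{\pm tX}$, the exponential moment generating function, and optimisation at $t=\delta/(1+\delta)$ (resp.\ $t=\delta/(1-\delta)$). The only cosmetic difference is that you normalise to rate $1$ first while the paper carries the parameter $\mu$ through; your explicit treatment of the lower tail (which the paper omits as ``similar'') and your remark on why $\delta<1$ is needed are both accurate.
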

%\guillem{Aqui he comentat lo de $\delta$ o $\delta^2$ en funcio del rang de la $\delta$. Ja ho necessitem per definir $g()$}

%
\begin{proof}
 By Markov's inequality, we have for every $\beta > 0$
$$
\Pr(X\geq(1+\delta)\E(X)) = \Pr(e^{\beta X}\ge e^{\beta(1+\delta)\E(X)}) \le
\frac{\prod \E(e^{\beta X_i})}{e^{\beta(1+\delta)\E(X)}}=(\varphi_{X_1}(\beta))^N
e^{-\beta(1+\delta)N/\mu}\;,
$$
where $\varphi_{X_1}(\beta)=\E(e^{\beta X_1})=\frac{\mu}{\mu-\beta}$ is the moment-generating
function of an
exponentially distributed random variable with parameter $\mu$.
Thus,
\begin{align*}
\Pr(X\geq (1+\delta)\E(X)) &\le \left(\frac{\mu}{\mu-\beta}\right)^N e^{-\beta(1+\delta)N/\mu}\\
&= \exp\left(N\left(-\log\left(1-\frac{\beta}{\mu}\right) - (1+\delta)\frac{\beta}{\mu} \right)
\right)\;.
\end{align*}
Setting $\frac{\beta}{\mu} =\frac{\delta}{1+\delta}$, we have
\begin{align*}
\Pr(X\geq(1+\delta)\E(X)) &\le \exp\left(N \left(\log\left(1+\delta\right) -
\delta\right)\right) = \left( \frac{1+\delta}{e^{\delta}} \right)^{N}.
\end{align*}
The lower tail is proved similarly.
%
%For the second part of the lemma, for $0 < \delta \leq 3$, using $1+x+x^2/2 \leq \exp(x)$ and $1-x
%\leq \exp(-x)$,
% $$
% \left( \frac{1+\delta}{e^{\delta}} \right)^{N} \leq \left( 1 -
%\frac{\delta^2/2}{e^{\delta}}\right)^N \leq e^{-\frac{\delta^2 N/2}{e^{\delta}}} \leq
%e^{-\frac{\delta^2 N}{2e^{3}}} \leq e^{-\delta^2 N/41},
%$$
%and thus, in this case,
%$$
%\Pr(X\ge (1+\delta)\E(X)) \le e^{-\delta^2 N/41}.
%$$
%For $\delta > 3$, note that $\log(1+\delta)-\delta \leq -\delta/2$, and thus
%$$
%\Pr(X\ge (1+\delta)\E(X)) \le \left( \frac{1+\delta}{e^{\delta}} \right)^{N} = \exp(N
%\left(\log(1+\delta)-\delta \right)) \leq \exp(-\delta N/2).
%$$
\end{proof}
\subsection{Proof of statement (i)}\label{ap:ssec:i}
Our argument in this subsection depends only on the Euclidean distance between $u$ and $v$, but not on their particular
position in $\SR$. Thus, let $t=d_E(u,v)$ and assume without loss of generality that $u=(0,0)$ and
$v=(t,0)$.

The next lemma shows that short paths between vertices are contained in small strips. It is stated in the more general context of a
geometric graph $G=(V,E)$ of radius $r$, where the vertex set $V$ is a subset of points in the
square $\SR$ (not necessarily randomly placed), and edges connect (as usual) every pair of vertices at
Euclidean distance at most $r$. For every $\alpha>0$, consider the rectangle $R = [0,t]\times [-\alpha,\alpha].$
%

% For a fixed pair of vertices $u$ and $v$, throughout the whole subsection let
%\jd{I believe we can put the one needed in Prop.2.3} $\alpha=\sqrt{\delta/2}\left((d_G(u,v))^3r^2/t\right)^{1/3}=o(d_G(u,v) r)$ for any constant  $0<\delta <2^{-1/3}$.

\begin{lemma}\label{ap:lem:strip}
Let $G=(V,E)$ be a geometric graph with radius $r$ in $\SR$, and let $u,v\in V$ such that $u=(0,0)$
and $v=(t,0)$. Suppose that $t=d_E(u,v)  \ge  kr-\frac{2\alpha^2}{kr}$, for some $k\in\ent^+$ and $\alpha=o(kr)$. Then all paths of length at
most $k$ from $u$ to $v$ are contained in $R$.
\end{lemma}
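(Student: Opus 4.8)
The plan is to argue by contradiction. Fix any path $P=(u=w_0,w_1,\dots,w_\ell=v)$ in $G$ with $\ell\le k$, write $w_i=(x_i,y_i)$, and let $L=\sum_{i=0}^{\ell-1}d_E(w_i,w_{i+1})$ denote its total Euclidean length. Since every edge of $G$ has Euclidean length at most $r$, we have $L\le\ell r\le kr$. Suppose, for contradiction, that some vertex $w_j$ of $P$ lies outside $R$. The only structural inequality I shall use is that, splitting $P$ at $w_j$ and applying the triangle inequality to each of the two resulting sub-paths,
\[
L\ \ge\ d_E(u,w_j)+d_E(w_j,v).
\]

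First I would rule out $|y_j|>\alpha$, that is, the possibility that $P$ leaves the horizontal strip. For any point at height $y_j$, reflecting $u$ across the horizontal line through $w_j$ shows that $d_E(u,(x,y_j))+d_E((x,y_j),v)\ge\sqrt{t^2+4y_j^2}$ for \emph{every} $x$; hence $L\ge\sqrt{t^2+4y_j^2}>\sqrt{t^2+4\alpha^2}$. It then suffices to verify that $\sqrt{t^2+4\alpha^2}\ge kr$. Since $\alpha=o(kr)$ we have $kr-\tfrac{2\alpha^2}{kr}\ge0$, so squaring the hypothesis $t\ge kr-\tfrac{2\alpha^2}{kr}$ yields $t^2\ge(kr)^2-4\alpha^2+\tfrac{4\alpha^4}{(kr)^2}\ge(kr)^2-4\alpha^2$, i.e.\ $t^2+4\alpha^2\ge(kr)^2$. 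Combining, $L>kr\ge\ell r\ge L$, a contradiction; so every vertex of $P$ satisfies $|y_i|\le\alpha$.

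It remains to rule out $x_j\notin[0,t]$. Here I would project $P$ onto the first coordinate: the sequence $x_0=0,x_1,\dots,x_\ell=t$ changes by at most $r$ at each step and has total variation at most $L\le kr$; since it passes through $x_j$, its total variation is also at least $t+2\,\mathrm{dist}(x_j,[0,t])$, which already confines $x_j$ to within $\tfrac{\alpha^2}{kr}$ of $[0,t]$. A short refinement counting edges (the sub-path from $w_j$ to $v$ uses at least $\lceil d_E(w_j,v)/r\rceil\ge\lceil(t+\mathrm{dist}(x_j,[0,t]))/r\rceil$ edges, and at least one more is spent reaching $w_j$ from $u$, so $t+\mathrm{dist}(x_j,[0,t])\le(k-1)r$) then leaves no room for $\mathrm{dist}(x_j,[0,t])>0$, again using $t\ge kr-\tfrac{2\alpha^2}{kr}$. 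Therefore $P\subseteq R$.

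I expect the only genuinely delicate point to be the calibration of the error term $\tfrac{2\alpha^2}{kr}$ in the hypothesis on $t$: it is chosen so that the detour penalty $\sqrt{t^2+4\alpha^2}$ incurred by leaving the strip still reaches $kr$ --- indeed $\bigl(kr-\tfrac{2\alpha^2}{kr}\bigr)^2$ exceeds $(kr)^2-4\alpha^2$ by only the lower-order amount $\tfrac{4\alpha^4}{(kr)^2}$. The assumption $\alpha=o(kr)$ is used only to keep $kr-\tfrac{2\alpha^2}{kr}$ nonnegative (so that the squaring step is valid) and to make $\tfrac{2\alpha^2}{kr}$ negligible in the first-coordinate estimate; the reflection lower bound $d_E(u,w_j)+d_E(w_j,v)\ge\sqrt{t^2+4y_j^2}$ is the only nontrivial geometric input, and the rest is bookkeeping.
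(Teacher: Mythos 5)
Your first part---ruling out $|y_j|>\alpha$---is correct and is essentially the paper's own argument: the reflection bound $d_E(u,w_j)+d_E(w_j,v)\ge\sqrt{t^2+4y_j^2}$ is exactly the paper's convexity bound $2\sqrt{t^2/4+b^2}$ applied to the extremal vertex, and the squaring of the hypothesis on $t$ is the same computation. This strip containment is in fact the only part of the conclusion that the paper proves (its proof simply asserts that the relevant vertex has $x$-coordinate in $[0,t]$), and it is also the only part used later: in Proposition~\ref{ap:prop:lb} what matters is that every vertex of a short path has $|y|\le\alpha$, so that the greedily chosen points $v_i$ dominate the $x$-coordinates of the path.

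The second part, however, has a genuine gap at its final step. Your refinement yields $t+\mathrm{dist}(x_j,[0,t])\le (k-1)r$, hence, using $t\ge kr-\frac{2\alpha^2}{kr}$, only $\mathrm{dist}(x_j,[0,t])\le \frac{2\alpha^2}{kr}-r$. This is a contradiction only when $\frac{2\alpha^2}{kr}<r$, i.e.\ $\alpha<r\sqrt{k/2}$, and that is \emph{not} implied by $\alpha=o(kr)$ once $k\to\infty$ (for instance $\alpha=rk^{2/3}$ is $o(kr)$ yet far exceeds $r\sqrt{k/2}$). Indeed, in the regime $\frac{2\alpha^2}{kr}>r$ the $x$-containment asserted by the lemma can genuinely fail for a deterministic geometric graph: with $t=kr-\frac{2\alpha^2}{kr}$, one can place a vertex at $(-\eps_0,0)$ with $0<\eps_0\le\frac12\bigl(\frac{2\alpha^2}{kr}-r\bigr)$ and still reach $v$ within $k$ edges, since $t+\eps_0\le(k-1)r$. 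So this step cannot be repaired under the stated hypotheses; your weaker conclusion $\mathrm{dist}(x_j,[0,t])\le\frac{\alpha^2}{kr}=o(\alpha)$ is essentially the best one can say. Since the paper neither proves nor subsequently uses the $x$-containment, your argument is no weaker than the paper's where it counts, but the sentence claiming the refinement ``leaves no room for $\mathrm{dist}(x_j,[0,t])>0$'' is not justified as written.
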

\begin{proof}
Suppose that there exists a path from $u$ to $v$ in at most $k$ steps. Let $z=(a,b)$ the vertex with largest $y$-coordinate in
that path.
Since $a\in [0,t]$, for any $b$ we have,
$$
kr\geq \sqrt{a^2+b^2}+\sqrt{(t-a)^2+b^2} \geq  2\sqrt{t^2/4+b^2} \;.
$$
Therefore,
$$
\frac{(kr)^2}{4} \geq \frac{t^2}{4}+b^2 \ge \frac{(kr-\frac{2\alpha^2}{kr})^2}{4}+b^2\;,
$$
where we used that $t \ge kr-\frac{2\alpha^2}{kr}$. Using that $\alpha = o(kr)$ we have
$$
b \leq  \alpha \sqrt{1-\alpha^2/(kr)^2}=(1-o(1))\alpha\;.
$$
Repeating the same argument for the vertex with smallest $y$-coordinate, we conclude that the path is contained in $R=[0,t]\times
[-\alpha,\alpha]$.
\end{proof}

\begin{proposition}\label{ap:prop:lb}

Let $G\in\RGuv$ be a random geometric graph on $\SR$, with $u=(0,0)$ and $v=(t,0)$.
Then, for every $0<\delta<2^{-1/3}$, we have that
\begin{align}\label{ap:eq:cond}
\Pr \left( d_G(u,v)  \leq \frac{t}{r}\left(1 +\frac{\delta}{(tr)^{2/3}} \right) \right) \le \frac{t}{r} \exp\left(- \sqrt{\delta/2}(tr)^{2/3}\right)+
\exp\left(-(1-\sqrt{2\delta^3})^2 \frac{t}{2r}\right).
\end{align}
\end{proposition}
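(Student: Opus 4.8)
The plan is to fix an integer $k$ such that a path of length $\le k$ would contradict the claimed bound, show that any such path is forced into a thin rectangle $R$ by Lemma 2.3, and then show that the number of vertices of $V'$ available inside $R$ is too small to allow a path that short. Concretely, set $k=\lceil \frac{t}{r}(1+\delta(tr)^{-2/3})\rceil$, so that $d_G(u,v)\le \frac{t}{r}(1+\delta(tr)^{-2/3})$ forces $d_G(u,v)\le k$. Choosing $\alpha$ of order $(tr)^{?}$ tuned so that $kr-\frac{2\alpha^2}{kr}\le t$ — a short computation gives $\alpha\approx\sqrt{\delta/2}\,(tr)^{1/3}\sqrt{r}$ roughly, I would pin down the exact constant to make the hypothesis of Lemma 2.3 hold — we get from Lemma 2.3 that every path of length at most $k$ from $u$ to $v$ lies inside $R=[0,t]\times[-\alpha,\alpha]$, a rectangle of area $2\alpha t$.

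The heart of the argument is then a first-moment/large-deviation estimate inside $R$. I would split into two bad events whose probabilities match the two terms on the right-hand side of \eqref{ap:eq:cond}. First event: $R$ contains "too many" vertices of $V'$ — specifically more than some threshold $M$. Since the number of points of $V'$ in $R$ is Poisson with mean $2\alpha t$, a standard Chernoff/Poisson tail bound (or Lemma 2.2 applied suitably) gives a bound of the form $\frac{t}{r}\exp(-\sqrt{\delta/2}(tr)^{2/3})$ after plugging in $\alpha$; the prefactor $t/r$ presumably comes from a union bound over $\approx t/r$ sub-boxes tiling $R$, or from the number of possible path lengths. Second event: even conditioned on $R$ being sparse, a path of length $\le k$ inside $R$ must advance from $x$-coordinate $0$ to $x$-coordinate $t$; each edge has length at most $r$, so the total $x$-progress is at most $kr$, while each step's $x$-progress is a bounded random-ish quantity — I would model the successive $x$-increments along a would-be shortest path and show that getting from $0$ to $t$ in only $k=\frac{t}{r}(1+\text{tiny})$ steps is a deviation event for a sum of $\Theta(t/r)$ roughly-exponential (or at least sub-exponential) increments, and apply Lemma 2.2 (the lower-tail bound $((1-\delta')e^{\delta'})^N$) with $N=\Theta(t/r)$. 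This is exactly where the second term $\exp(-(1-\sqrt{2\delta^3})^2\,\tfrac{t}{2r})$ comes from, with the $\frac{t}{2r}$ identifying $N\approx t/(2r)$ and the $(1-\sqrt{2\delta^3})^2$ coming from the $(1-\delta')^2$-type exponent in Lemma 2.2 after substituting $\delta'$ in terms of $\delta$.

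The main obstacle I anticipate is making the second event rigorous: "a path of length $\le k$ inside the thin strip $R$" is not literally a sum of i.i.d.\ exponentials, so I need to dominate the $x$-progress of an arbitrary short path by such a sum. The clean way is probably to tile $R$ into $\approx t/r$ consecutive vertical slabs of width $\Theta(r)$ and argue that a path crossing $R$ must use at least one vertex in (almost) every slab, then bound below the number of steps by the number of slabs minus a controlled error; alternatively, order the vertices of $V'\cap R$ by $x$-coordinate and note that consecutive gaps behave like exponentials of rate $\Theta(\alpha)$, so covering length $t$ needs many of them. I would then combine: outside the first bad event $R$ is sparse enough that any path must have $\ge k+1$ edges, contradiction; inside it, the Poisson tail is small. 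The constant $\delta<2^{-1/3}$ is presumably exactly the threshold that keeps $1-\sqrt{2\delta^3}>0$ so the second exponent is genuinely negative, and also keeps $\alpha=o(kr)$ so Lemma 2.3 applies — I would verify both of these conditions explicitly at the point where $\alpha$ and $k$ are chosen. Finally, a union bound over the two events yields \eqref{ap:eq:cond}.
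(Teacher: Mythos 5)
Your skeleton is the right one — confine any short path to the strip $R=[0,t]\times[-\alpha,\alpha]$ via Lemma~\ref{ap:lem:strip}, then treat the per-step $x$-progress as $r$ minus a rate-$\Theta(\alpha)$ exponential gap and apply the lower-tail bound of Lemma~\ref{ap:lem:concentration} with $N\approx t/r$; your reading of the second term, including $\varepsilon=1-\sqrt{2\delta^3}$, matches the paper. But there are two genuine gaps. First, you misidentify the first bad event. It is not ``$R$ contains too many points'' (which would anyway push in the wrong direction for a \emph{lower} bound on $d_G$); it is the event that some gap exceeds $r/2$, i.e.\ that one of $k$ disjoint sub-rectangles of area $\alpha r$ is \emph{empty}. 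This event is exactly what is needed to guarantee that the truncated gaps $a_i=\min\{\tilde a_i,\,r-a_{i-1}\}$ agree with genuine unbounded i.i.d.\ exponentials $\tilde a_i$, so that Lemma~\ref{ap:lem:concentration} applies to $\sum\tilde a_i$; a union bound over the $k\le t/r$ windows gives $k\,e^{-\alpha r}=\frac{t}{r}e^{-\sqrt{\delta/2}(tr)^{2/3}}$ (note also your $\alpha$ has the wrong exponent: the correct choice is $\alpha=\Theta\bigl(\sqrt{\delta}\,t^{2/3}r^{-1/3}\bigr)$, not $\sqrt{\delta/2}(tr)^{1/3}\sqrt{r}$).

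Second, the step you flag as the ``main obstacle'' is the heart of the proof, and neither of your proposed fixes closes it. The paper's device is a greedy construction: $v_i$ is the vertex of largest $x$-coordinate in the window $(x_{i-1}+a_{i-1},\,x_{i-1}+r]\times[-\alpha,\alpha]$; these windows are disjoint (so the gaps $a_i$ are independent Poisson-emptiness variables), and a short induction shows that the $i$-th vertex of \emph{any} path in $R$ starting at $u$ has $x$-coordinate at most $x_i$, so $x_k<t$ rules out all length-$k$ paths simultaneously. Your slab-tiling alternative only yields $d_G(u,v)\ge t/r$ up to constants and cannot see the $\delta(tr)^{-2/3}$ correction; your order-statistics alternative fails because a path need not visit consecutive points in the $x$-ordering of $V'\cap R$, so consecutive order-statistic gaps do not lower-bound the number of steps. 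Without the greedy domination (or an equivalent), the reduction of ``no short path exists'' to a large-deviation event for an i.i.d.\ exponential sum is not established.
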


\begin{proof}
Let $k=d_G(u,v)$ and let $\alpha=\sqrt{\delta/2}\left((d_G(u,v))^3r^2/t\right)^{1/3}=o(d_G(u,v) r)$.
Consider the event $A_{\alpha}$ that all the paths from $u$ to $v$ of length $k$ are contained in the rectangle
$R=[0,t]\times [-\alpha,\alpha]$ and let $B$ the event defined by condition~\eqref{ap:eq:cond}.
If $B$ holds, then
\begin{align*}
t&\geq \frac{kr}{\left( 1+\frac{\delta}{(tr)^{2/3}}\right)}
\geq kr\left( 1-\frac{\delta}{(tr)^{2/3}}\right)
= kr-\frac{2\alpha^2}{kr}\;.
\end{align*}

Since $\alpha=o(kr)$, by Lemma~\ref{ap:lem:strip},  $\Pr(B|\overline{A_\alpha})=0$.

%By the law of total probability,
%$$
%\Pr(B)=\Pr(B|A_\alpha) \Pr(A_\alpha)+\Pr(B|\overline{A_\alpha}) \Pr(\overline{A_\alpha}) \leq
%\Pr(B|A_\alpha)\;,
%$$
%so we may assume that the event $A_\alpha$ holds.

Denote by $v_1$ the vertex with largest $x$-coordinate inside the rectangle $R_1=[0,r]\times [-\alpha,\alpha]$
(possibly $v_1=u$ if $R_1$ contains no other vertices of $\RGuv$).
Note that $v_1$ might not be connected to $u$, but observe that its $x$-coordinate is always greater
or equal to the
$x$-coordinate of any vertex $u_1\in R_1$ connected to $u$ (see Figure~\ref{ap:fig:insideR}). Let $x_1$ be the
$x$-coordinate of $v_1$, and define the random variable $a_1=r-x_1$. By definition, $0\le a_1\le r$. Since $G\in \RGuv$, the number of vertices from $V$ inside a region of $\SR$ is a Poisson random variable with mean equal to the area of that region. Hence, the random variable $a_1$ satisfies
\begin{equation}\label{ap:eq:1}
\Pr(a_1\geq \beta) =
\begin{cases}
e^{-2\alpha \beta} & \text{if $0\le\beta\le r$}\\
0 & \text{if $\beta>r.$}
\end{cases}
\end{equation}
Thus, $a_1$ is stochastically dominated by an exponentially distributed random variable $\tilde a_1$
of parameter $2\alpha$. We assume that $a_1$ and $\tilde a_1$ are coupled together in the same
probability space, so that $a_1=\min\{\tilde a_1, r\}\le\tilde a_1$.

We  proceed to define in a similar way the points $v_i$ and the values $x_i$ and $a_i$, for any
$2\leq i\leq k$.
Let $v_i$ be the  vertex with largest $x$-coordinate inside the rectangle  $R_i=(x_{i-1}+a_{i-1}, x_{i-1}+r]\times
[-\alpha,\alpha]$, and let $x_i$ be the $x$-coordinate of $v_i$. Define $a_i=x_{i-1}+r-x_i$. If
$R_i$ contains no vertex of $\RGuv$, then add an \emph{extra} vertex $v_i=(x_{i-1}+a_{i-1},0)$ (so
in that case $x_i=x_{i-1}+a_{i-1}$ and $a_i= r- a_{i-1}$).
Observe that $0\le a_i\le r-a_{i-1}$, so the rectangles $R_1,R_2,\ldots,R_k$ are disjoint. Moreover,
\begin{equation}\label{ap:eq:i}
\Pr(a_i\geq \beta) =
\begin{cases}
e^{-2\alpha \beta} & \text{if $0\le\beta\le r-a_{i-1}$}\\
0 & \text{if $\beta>r-a_{i-1},$}
\end{cases}
\end{equation}
for every $1\le i\le k$ (by defining $a_0=0$). Therefore, $a_1,a_2,\ldots,a_k$ are stochastically
dominated by a sequence $\tilde a_1,\tilde a_2,\ldots,\tilde a_k$ of i.i.d.\ exponentially
distributed random variables of parameter $2\alpha$, such that $a_i=\min\{\tilde a_i,
r-a_{i-1}\}\le\tilde a_i$ for all $1\le i\le k$.

\begin{figure}[ht!]
 \begin{center}
 \includegraphics[width=0.9\textwidth]{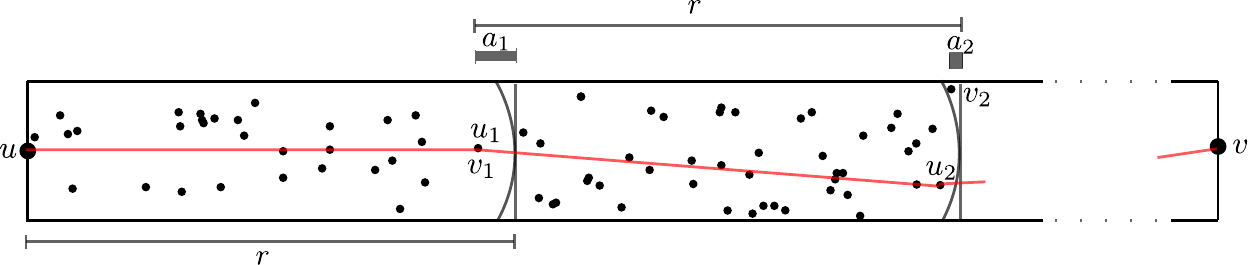}
 \end{center}
 \caption{The points $u_1$, $v_1$ coincide but $u_2$ and $v_2$ do not}\label{ap:fig:insideR}
 \label{ap:fig:path}
\end{figure}

Note that the vertices $u,v_1,v_2,\ldots,v_k$ may not induce a connected path in $\RGuv$, since
the Euclidean distance between two consecutive ones may be greater than $r$.  However, the fact that
$v_i$ is the vertex with largest $x$-coordinate inside  $[0,x_{i-1}+r]\times[-\alpha,\alpha]$ and  together with a straightforward
induction argument yield to the following claim:
if $u=u_0,u_1,u_2\ldots,u_k$ is a path contained in $R$, then for every $1\le i\le k$ the
$x$-coordinate of $u_i$ is at most $x_i$ (see again Figure~\ref{ap:fig:insideR}).
We will now show that $x_k<t$ with the desired probability.

Define
\[
a=\sum_{i=1}^k a_i
\qquad\text{and}\qquad
\tilde a=\sum_{i=1}^k \tilde a_i.
\]
Expanding recursively from the relations $x_i=x_{i-1}+r-a_i$ and $x_1=r-a_1$, we get
\[
x_k = \sum_{i=1}^k (r-a_i) = kr - a.
\]
Let us consider the event that $\tilde a_i\le r/2$ for all $1\le i\le k$. In particular, this event
implies that $a_i=\tilde a_i$ for all $i$, and therefore $x_k = kr - \tilde a$.
Since each $\tilde a_{i}$ is exponentially distributed with parameter $2\alpha$,
\begin{equation}\label{ap:abs:bound}
 \Pr\left(\exists i : a_i > r/2\right) \le k \Pr\left(a_1 > r/2\right) =
ke ^{-\alpha r}\;,
\end{equation}
so $x_k = kr - \tilde a$ with probability at least $1-ke^{-\alpha r }$.

Moreover, notice that
\[
\mathbb{E}(\tilde a)=k\mathbb{E}(\tilde a_1) = \frac{k}{2\alpha}\;.
\]
By Lemma~\ref{ap:lem:concentration}, for all $0<\varepsilon<1$,
\begin{align}\label{ap:eq:previous}
 \Pr\left(\tilde a\le(1-\varepsilon)\frac{k}{2\alpha}\right) &\le
\left((1-\varepsilon)e^{\varepsilon}\right)^{k}
<e^{-\varepsilon^2k/2}\;.
\end{align}
where we have used that $\log(1-x) < -x-\frac{x^2}{2}$, for any  $0<x<1$.

If $t < kr\left(1-\frac{\delta}{(tr)^{2/3}}\right)=kr(1+o(1))$, by definition of $B$, $\Pr(B)=0$ and we are done. Thus, we may assume that $t\geq kr(1+o(1))$.
For $\delta=\left(\frac{1-\varepsilon}{\sqrt{2}}\right)^{2/3}$, from~\eqref{ap:eq:previous} and~\eqref{ap:abs:bound},
\begin{align*}
\Pr\left(x_k \geq t \right)
&\leq \Pr\left(x_k \geq kr\left( 1-\frac{\delta}{(tr)^{2/3}}\right)  \right) \\
&\leq \Pr\left(x_k \geq kr-\frac{\sqrt{2}\delta^{3/2}kr}{t}\cdot\frac{k}{2\alpha}  \right)\\
&\leq \Pr\left(x_k \geq kr-(1-\varepsilon+o(1))\frac{k}{2\alpha}  \right)\\
& \leq \Pr\left(\exists i : a_i > r/2\right) + \Pr\left(\forall i : a_i \leq r/2\text{and}
a\le(1-\varepsilon+o(1))\frac{k}{2\alpha}\right)\\
& \leq \Pr\left(\exists i : a_i > r/2\right) + \Pr\left(\tilde
a\le(1-\varepsilon+o(1))\frac{k}{2\alpha}\right)\\
&\le k e^{-\alpha r }+e^{\varepsilon^2 k/2 } \\
&\le \frac{t}{r} \exp\left(- \sqrt{\delta/2}(tr)^{2/3}\right)+\exp\left(-(1-\sqrt{2\delta^3})^2 \frac{t}{2r}\right)\;.
%&\le 2e^{-\min\{ \alpha r -\log{k},\varepsilon^2 k/2 \} } \\
%&= 2e^{-\min\{ \sqrt{\delta/2}(tr)^{2/3} -\log{(t/r)}, (1-\sqrt{2\delta^3})^2 t/2r\}}\;.
\end{align*}
Hence, if $A_{\alpha}$ holds, then  $\Pr{(B)} \leq \Pr\left(x_k \geq t \right) \leq \frac{t}{r} \exp\left(- \sqrt{\delta/2}(tr)^{2/3}\right)+\exp\left(-(1-\sqrt{2\delta^3})^2 \frac{t}{2r}\right)$, and if $\overline{A_{\alpha}}$ holds, then $\Pr{(B)}=0$. Thus, the proposition follows.
% In view of this and~\eqref{ap:abs:bound}, with probability at least $1-2e^{-\min\{(k
% r^2)^{2/3},(5-4\delta)^2 k/2\}}$
% \[
% = kr - \tilde a < k\Big(r - \frac{(1-\varepsilon)}{2\alpha}=
% kr(1-\frac{\delta}{2(k^2r)^1/3})\Big),
% \]
% so $x_k < t$ as desired. Then we have that
% $$
% \Pr(k\geq \frac{t}{r}(1+\frac{\delta}{(tr)^{2/3}})|A_\alpha)\leq \Pr(t\leq
% kr(1-\frac{\delta}{(tr)^{2/3}})|A_\alpha)
% $$
% and that $kr\geq t$ and hence, with probability at least
% $1-2e^{-\min\{\left(\frac{\delta}{2}\right)^{1/2} \left(\frac{t}{r}\right)^{2/3}
% -\log{(t/r)},\frac{(1-\delta^{3/2})^2 t}{2r}\}}$,
\end{proof}
% $1-e^{-\min\{\frac{\delta^{1/2}k^2r^{4/3}}{2} -\log{k},(1-\delta^{3/2})^2 k/2\}}$
%
\begin{proposition}
Let $\RGuv$ be a random geometric graph in $\SR$ with labelled vertices $u$ and $v$ such that
$d_E(u,v) \geq 20 r\log{n}$.
Then we have
$$
d_G(u,v) \le \frac{d_E(u,v)}{r}\left(1+\frac{1}{2(r d_E(u,v))^{2/3}} \right)\;,
$$
with probability at most $o(n^{-5/2})$.
\end{proposition}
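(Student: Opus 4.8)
The plan is to read the statement off from Proposition~\ref{ap:prop:lb}, which is tailored for it and lives in the same model $\RGuv$. Put $t=d_E(u,v)$ and, as there, take $u=(0,0)$, $v=(t,0)$. I would apply Proposition~\ref{ap:prop:lb} with the single value $\delta=\tfrac12$. This is admissible, since $0<\tfrac12<2^{-1/3}\approx0.79$, and it is the natural choice because $\tfrac{\delta}{(tr)^{2/3}}=\tfrac{1}{2(tr)^{2/3}}$: the event bounded in Proposition~\ref{ap:prop:lb} is then exactly $\bigl\{d_G(u,v)\le\tfrac{t}{r}\bigl(1+\tfrac{1}{2(tr)^{2/3}}\bigr)\bigr\}$, the event appearing in our statement. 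With $\delta=\tfrac12$ one has $\sqrt{\delta/2}=\tfrac12$ and $(1-\sqrt{2\delta^3})^2=(1-\tfrac12)^2=\tfrac14$, so the bound of Proposition~\ref{ap:prop:lb} becomes
\[
\Pr\!\left(d_G(u,v)\le\frac{t}{r}\Bigl(1+\frac{1}{2(tr)^{2/3}}\Bigr)\right)\ \le\ \frac{t}{r}\,\exp\!\left(-\tfrac12(tr)^{2/3}\right)+\exp\!\left(-\frac{t}{8r}\right).
\]

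It then remains to check that both summands on the right are $o(n^{-5/2})$, using the hypothesis $t=d_E(u,v)\ge 20\,r\log n$ together with the standing assumption $r\ge r_c=\sqrt{\log n/\pi}$. For the first term I would bound $t/r$ crudely via the deterministic inequality $t\le\diam(\SR)=\sqrt{2n}$, so $t/r\le\sqrt{2n}/r_c$ is only polynomially large, whereas $tr\ge 20\,r^2\log n\ge(20/\pi)\log^2 n$ gives $(tr)^{2/3}=\Omega(\log^{4/3}n)$; hence $\exp(-\tfrac12(tr)^{2/3})$ decays faster than any power of $n$, and the first term is $o(n^{-5/2})$ with room to spare. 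For the second term, $t\ge 20\,r\log n$ yields $\tfrac{t}{8r}\ge\tfrac{20}{8}\log n=\tfrac52\log n$, so $\exp(-\tfrac{t}{8r})\le n^{-5/2}$; this is precisely why the constant $20$ was imposed, and combined with the negligibility of the first term it gives the asserted bound $o(n^{-5/2})$ (any constant marginally larger than $20$ makes the decay of this term strictly faster).

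There is no real obstacle at this stage: the argument is the above invocation of Proposition~\ref{ap:prop:lb} plus two elementary estimates, and no de-Poissonization is needed here since Proposition~\ref{ap:prop:lb} is already phrased in $\RGuv$. The substantive work has all been done upstream, in the proof of Proposition~\ref{ap:prop:lb}: confining every short $u$--$v$ path to the thin rectangle $R=[0,t]\times[-\alpha,\alpha]$ via Lemma~\ref{ap:lem:strip}, and then controlling how far the greedy ``rightmost reachable vertex'' can advance in $k$ steps by stochastically dominating the increments with i.i.d.\ exponential variables of parameter $2\alpha$ and applying the concentration inequality of Lemma~\ref{ap:lem:concentration}. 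All of that we are entitled to take as given.
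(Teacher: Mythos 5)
Your proposal is correct and is essentially identical to the paper's own proof: the authors also apply Proposition~\ref{ap:prop:lb} with $\delta=1/2$, note that $\sqrt{\delta/2}(tr)^{2/3}-\log(t/r)=\Omega(\log^{4/3}n)$ and that $(1-\sqrt{2\delta^3})^2\tfrac{t}{2r}\ge\tfrac52\log n$, and conclude. Your parenthetical caveat that the second term is only bounded by $n^{-5/2}$ rather than strictly $o(n^{-5/2})$ applies equally to the paper's version, so there is no discrepancy to resolve.
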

\begin{proof}
As before, let $t=d_E(u,v)$ and $k=d_G(u,v)$. Also let $\delta=1/2$.

%We may assume that $k$ is the smallest integer satisfying $k \ge t/r (
%1-\delta/(tr)^{2/3})$.
%We only need
%to consider statement~(i)  for $k$ set to be the largest integer %satisfying $k \ge t/r (
%1-\delta/(tr)^{2/3})$ (any smaller \dieter{larger} values of $k$ are %implied by this case), where
%$\delta=2^{-2/3}$.

Since $t\geq 20r\log{n}$ and $r\geq r_c=\Omega(\sqrt{\log{n}})$, we have
$$
\sqrt{\delta/2}(tr)^{2/3} -\log{(t/r)} = \Omega(\log^{4/3}{n})\;,
$$
and
$$
\left(1-\sqrt{2\delta^3}\right)^2 \frac{t}{2r}\geq \frac{5}{2}\log{n}\;.
$$

By Proposition~\ref{ap:prop:lb}, this implies that
$$
\Pr\left(k \leq \frac{t}{r}\left(1 +\frac{1}{2(tr)^{2/3}} \right)\right)=
o(n^{-5/2})\;.
$$
\end{proof}

To finish the proof of statement~(i) in Theorem~\ref{ap:thm:main},
by de-Poissonizing $\RGuv$, we have that in  $\RG$, \remove{technique discussed in the beginning of Section~\ref{ap:sec:main},}
statement~(i) in Theorem~\ref{ap:thm:main} holds for our choice of $u$ and $v$, with probability at
least $1-o(n^{-2})$.
Note that this fact does not depend on the particular location of $u$ and $v$ in $\SR$.
The statement follows by taking a union bound over all at most $n^2$ pairs of vertices.

\subsection{Proof of statement (ii)}\label{ap:ssec:ii}
As in Subsection~\ref{ap:ssec:i}, we pick two points in $\SR$, and put $t=\de(u,v)$. Let $\gamma$ be as
in the statement of Theorem~\ref{ap:thm:main}.
We assume first that $u=(0,0)$ and $v=(t,0)$, and consider a Poisson point process in the rectangle
$R=[0,t]\times[0,\alpha]$, for a certain $\alpha \leq  r$ that will be made precise later.

Let $\RGR$ denote the random geometric graph on the rectangle $R$, to which the points $u$ and $v$
are added.
We will show that the probability of having $\dg(u,v) \ge \frac{\de(u,v)}{r}\left(1+\delta r^{-4/3}
\right)$ decays exponentially in $\delta$.
%with exponentially small probability \xav{in} $\delta$, for \xav{an} element of
% $\RGR$.\xavcom{Alternative wording:
%We will show that the probability that $\dg(u,v) \ge \frac{\de(u,v)}{r}\left(1+\delta r^{4/3}
% \right)$ decays exponentially with $\delta$.
%}
For each point $z$ in $R$ with $x$-coordinate $s$, define the rectangle
\[
R_z=\left[s,s+\rho\right]\times[0,\alpha], \qquad\text{where}\quad \rho = r-\frac{\alpha^2}{r} .
\]
We need the following auxiliary lemma.
\begin{lemma}\label{ap:lem:rectangle}
For any vertex $z$ in $R$, all vertices in $R_z$ are connected to $z$ (see
Figure~\ref{ap:fig:rectangle}).
\end{lemma}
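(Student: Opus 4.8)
The plan is to show that any point $w\in R_z$ is within Euclidean distance $r$ of $z$, so that the edge $zw$ is present in $\RGR$ by definition of the geometric graph. First I would set up coordinates: write $z=(s,c)$ with $0\le c\le\alpha$, and let $w=(s',c')\in R_z$, so that $s\le s'\le s+\rho$ and $0\le c'\le\alpha$. The two coordinates contribute independently to the squared distance, so $d_E(z,w)^2=(s'-s)^2+(c'-c)^2$. For the horizontal term, $0\le s'-s\le\rho=r-\alpha^2/r$, hence $(s'-s)^2\le\rho^2=r^2-2\alpha^2+\alpha^4/r^2$. For the vertical term, since both $c$ and $c'$ lie in $[0,\alpha]$, we have $|c'-c|\le\alpha$, hence $(c'-c)^2\le\alpha^2$.

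Adding these two bounds gives
\[
d_E(z,w)^2 \le r^2-2\alpha^2+\frac{\alpha^4}{r^2}+\alpha^2
= r^2-\alpha^2+\frac{\alpha^4}{r^2}
= r^2-\alpha^2\left(1-\frac{\alpha^2}{r^2}\right).
\]
Since $\alpha\le r$, the factor $1-\alpha^2/r^2$ is nonnegative, so the subtracted term is nonnegative and $d_E(z,w)^2\le r^2$, i.e.\ $d_E(z,w)\le r$. Therefore $z$ and $w$ are joined by an edge, and as $w$ was an arbitrary vertex of $R_z$, every vertex of $R_z$ is connected to $z$.

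There is essentially no obstacle here: the only thing to be careful about is that the estimate on the vertical coordinate uses that $R$ (and hence $R_z$) has height exactly $\alpha$ and that $z$ itself lies in $R$, so both $z$ and $w$ share the same vertical range $[0,\alpha]$; this is what keeps the vertical discrepancy bounded by $\alpha$ rather than $2\alpha$, and it is precisely why $\rho$ was chosen to be $r-\alpha^2/r$ rather than something smaller. The choice of $\rho$ is exactly tuned so that the worst case $(s'-s,c'-c)=(\rho,\alpha)$ (with $c=0$) lands on the circle of radius $r$ when $\alpha^4/r^2$ is negligible, and strictly inside it otherwise.
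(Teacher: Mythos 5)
Your proposal is correct and follows essentially the same route as the paper: both arguments reduce to checking that $\rho^2+\alpha^2\le r^2$, the paper by bounding the diagonal of $R_z$ (via $\sqrt{1-x}>1-x$) and you by expanding $\rho^2=(r-\alpha^2/r)^2$ directly for an arbitrary $w\in R_z$. The algebra is equivalent and the conclusion is the same.
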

\begin{proof}
It is enough to show that the upper-left and the bottom-right corner of $R_z$ are at distance at
most $r$. Then all vertices inside $R_z$ are connected to one another, and in particular $z$ is
connected to every vertex in $R_z$.  A sufficient condition for that is
\[
\sqrt{\rho^2 + \alpha^2} \le r,
\]
or equivalently
\[
r\left(1-(\alpha/r)^2\right)=\rho \le \sqrt{r^2 - \alpha^2}=r\sqrt{1-(\alpha/r)^2}.
\]
Since $\sqrt{1-x}>1-x$  for any $0<x<1$, the lemma follows.
\end{proof}

\begin{figure}[ht!]
 \begin{center}
 \includegraphics[width=0.4\textwidth]{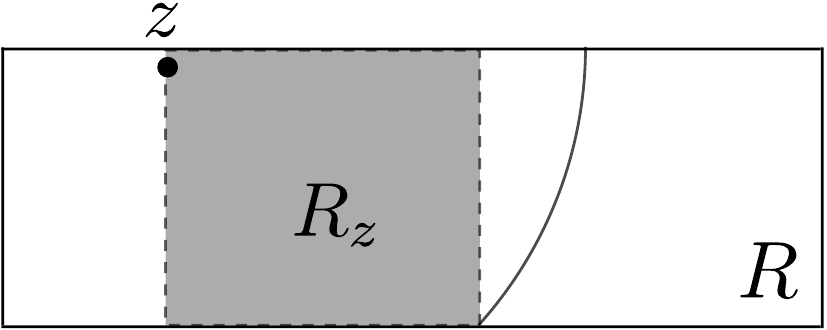}
 \end{center}
 \caption{The rectangle $R_z$}
 \label{ap:fig:rectangle}
\end{figure}

\begin{proposition}\label{ap:alternative}
Let $\RGR$ be a random geometric graph on $R$, with $u=(0,0)$ and $v=(t,0)$.
Let $F>0$ and $J>\frac{3(F+1)}{2^{2/3}}$ be constants and define $g(x)=x-\log(1+x)$.
%Define the function $g: \mathbb{R}^+ \rightarrow \mathbb{R}^+$ to be $g(\delta)=\delta^2/41$ for $0 < \delta \leq 3$, and $g(\delta)=\delta/2$ for $\delta > 3$.
Then, for every $J\le\delta\leq F r^{4/3}$, we have that
\[
\Pr \left( d_G(u,v) > \left\lceil    \frac{t}{r}\left(1+\delta r^{-4/3} \right)   \right\rceil \right) \le
n \exp\left(-\frac{(F+1) \delta^{1/2} r^{4/3}}{2J^{3/2}}\right) + \exp\left(-g\Big((\delta/J)^{3/2}\Big)\frac{t}{r}\right).
\]
\end{proposition}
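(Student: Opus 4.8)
The strategy mirrors the proof of Proposition~\ref{ap:prop:lb}, but now we build a path \emph{greedily} inside the thin strip $R$ instead of bounding from below the reach of any path. The idea is to walk from $u$ towards $v$ in steps, where at each step we jump to the vertex of largest $x$-coordinate that is guaranteed to be connected to the current vertex. Concretely, set $k = \lceil (t/r)(1+\delta r^{-4/3})\rceil$; we want to show that with the stated probability we reach (or overshoot in $x$-coordinate) the point $v$ in at most $k$ such steps, and then append the edge to $v$ itself.

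\textbf{Key steps.} First I would fix $\alpha$ appropriately — something of the form $\alpha = \Theta(\delta^{1/2} r^{2/3})$ — so that the "loss per step'' $\rho = r - \alpha^2/r = r - \Theta(\delta r^{1/3})$ is the right size: after $k \approx (t/r)(1+\delta r^{-4/3})$ steps the nominal horizontal progress $k\rho$ exceeds $t$, provided the random fluctuations don't accumulate too much against us. Second, I would define the greedy sequence of vertices $z_0 = u, z_1, z_2, \ldots$: given $z_{i-1}$ with $x$-coordinate $s_{i-1}$, by Lemma~\ref{ap:lem:rectangle} every vertex of $R$ inside $R_{z_{i-1}} = [s_{i-1}, s_{i-1}+\rho]\times[0,\alpha]$ is adjacent to $z_{i-1}$; let $z_i$ be the one with largest $x$-coordinate (adding a dummy vertex at $x$-coordinate $s_{i-1}$ if $R_{z_{i-1}}$ is empty of points of $\RGR$, exactly as in Proposition~\ref{ap:prop:lb}). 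Set $b_i = \rho - (s_i - s_{i-1}) = s_{i-1}+\rho - s_i$, the "shortfall'' of step $i$. As in the earlier proof, by the Poisson property and disjointness of the relevant slabs, the $b_i$ are stochastically dominated by i.i.d.\ exponentials $\tilde b_i$ of parameter $\alpha$ (truncated at $\rho$); conditioning on the event that no $\tilde b_i$ exceeds, say, $\rho$, costs only $k\,e^{-\alpha\rho}$ and makes $s_k = k\rho - \sum_i \tilde b_i$. Third, $s_k \ge t$ holds as soon as $\sum_{i=1}^k \tilde b_i \le k\rho - t$; since $\mathbb{E}(\sum \tilde b_i) = k/\alpha$, I would choose the parameters so that $k\rho - t = (1+\Theta(1))\,k/\alpha$, i.e.\ the target is a constant factor above the mean, and apply the upper-tail bound of Lemma~\ref{ap:lem:concentration} with the function $g(x) = x - \log(1+x)$ to get a bound of the form $\exp(-g(c)\,k)$ with $k = \Theta(t/r)$. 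Finally, once $z_k$ has $x$-coordinate $\ge t$, the horizontal gap to $v$ is $\le 0$ and the vertical gap is $\le \alpha \le r$, so $z_k$ is adjacent to $v$, giving a path of length $k+1$; one absorbs the "$+1$'' into the ceiling (this is why $k$ is defined with a ceiling and why one needs $J > 3(F+1)/2^{2/3}$ — it guarantees enough slack that $k\rho - t$ stays a fixed multiple of $k/\alpha$ uniformly over the allowed range of $\delta$). Assembling the two error terms $n e^{-\Theta(\delta^{1/2}r^{4/3})}$ (from $k e^{-\alpha\rho}$, using $k \le n$ crudely) and $e^{-g((\delta/J)^{3/2}) t/r}$ (from the concentration step) yields the claimed inequality; the precise constants $31$, $70$, $300^{2/3}$, and the $70\sqrt{\log n}$ threshold then come from plugging optimal $\delta$ into this bound at the end.

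\textbf{Main obstacle.} The delicate point is the bookkeeping of the three competing scales. We need $\alpha \le r$ (so Lemma~\ref{ap:lem:rectangle} applies and the strip fits), we need $\alpha$ large enough that the exponentials $\tilde b_i \sim \mathrm{Exp}(\alpha)$ are small enough that $k/\alpha$ is a genuinely lower-order correction to $k r$ (this forces $\alpha \gtrsim \delta^{1/2}r^{2/3}$, hence the appearance of $r^{-4/3}$ in the error term), and we need $\alpha$ small enough that the truncation term $k e^{-\alpha\rho}$ is negligible. Verifying that a single choice of $\alpha$ as a function of $\delta$ and $r$ satisfies all of these simultaneously for every $\delta$ in the range $[J, F r^{4/3}]$ — and that the resulting $k\rho - t$ is bounded below by the right constant multiple of $k/\alpha$ so that the $g(\cdot)$ exponent has a clean form — is the computational heart of the argument, and is exactly where the hypothesis $J > 3(F+1)/2^{2/3}$ is used. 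The rest is a routine adaptation of the machinery already developed for statement~(i).
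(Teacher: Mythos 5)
Your overall strategy is exactly the paper's: a greedy rightmost-vertex walk inside the strip, Lemma~\ref{ap:lem:rectangle} to guarantee adjacency inside each rectangle $R_z$, stochastic domination of the per-step shortfalls by i.i.d.\ exponentials of parameter $\alpha$, a truncation event costing $k e^{-\Theta(\alpha\rho)}$, and the upper tail of Lemma~\ref{ap:lem:concentration} producing the $g(\cdot)$ term. However, your choice $\alpha=\Theta(\delta^{1/2}r^{2/3})$ is wrong, and you carry it through consistently (you write $\rho=r-\Theta(\delta r^{1/3})$), so it is not a stray typo. With that $\alpha$ the geometric loss per step is $\alpha^2/r=\Theta(\delta r^{1/3})$, whereas the budget allowed by $k=\lceil (t/r)(1+\delta r^{-4/3})\rceil$ is only $\Theta(\delta r^{-1/3})$ per step; hence $k\rho<t$ and your central claim that ``the nominal horizontal progress $k\rho$ exceeds $t$'' fails. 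Moreover, for $\delta$ near the top of the allowed range $Fr^{4/3}$ you would get $\alpha=\Theta(r^{4/3})>r$, so the strip no longer fits and Lemma~\ref{ap:lem:rectangle} is unavailable. The correct choice, used in the paper, is $\alpha=B\delta^{1/2}r^{1/3}$ with $B$ satisfying $B^2+C/B\le 1/(F+1)$ and $C=1/J^{3/2}$: this balances $\alpha^2/r=B^2\delta r^{-1/3}$ and $1/\alpha=B^{-1}\delta^{-1/2}r^{-1/3}\le B^{-1}J^{-3/2}\delta r^{-1/3}$ against the budget, and it is also the only choice consistent with your own first error term, since $\alpha\rho=\Theta(\delta^{1/2}r^{4/3})$ requires $\alpha=\Theta(\delta^{1/2}r^{1/3})$.

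Separately, your endgame is off by one and the proposed fix is not valid: the walk $u=z_0,z_1,\dots,z_k$ followed by the edge $z_kv$ has $k+1$ edges, and the ceiling cannot ``absorb'' the extra edge because $k$ is already defined as that ceiling and the claimed bound is $d_G(u,v)\le k$. The paper's remedy is to construct only $k-1$ greedy vertices and to show $x_{k-1}+\rho\ge t$; then $v$ itself lies in $R_{v_j}$ for some $j\le k-1$, so $u,v_1,\dots,v_j,v$ has length $j+1\le k$. (A smaller point: the constants $31$, $70$, $300^{2/3}$ and the threshold $70\sqrt{\log n}$ do not arise in this proposition at all; they come from Proposition~\ref{ap:prop:new_upperbound}, where a specific $\delta=\gamma$ is substituted.) Both defects are repairable without changing your plan, but as written the first makes the key inequality false and the second leaves the path one edge too long.
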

\begin{proof}
Set $C=1/J^{3/2}$, and let $B$ be any positive constant satifying
\begin{equation}\label{ap:eq:ineB}
B^2+C/B \le 1/(F+1).
\end{equation}
Some elementary analysis shows that such $B$ must exist. In fact, the equation $B^2+C/B= 1/(F+1)$ has exactly two positive solutions $B_1$ and $B_2$ for any $0<C< \frac{2}{(3(F+1))^{3/2}}$, and any $0<B_1\le B\le B_2<1/\sqrt{F+1}$ satisfies~\eqref{ap:eq:ineB}.
%In particular, we may choose the $B$ in that interval that maximises $(1-B^2F)B$.

Let us consider the integer $k = \lceil \frac{t}{r}(1+\delta r^{-4/3})\rceil$.
We will show that with very high probability there exists a path length at most $k$ between $u$ and $v$. Such a path will only use vertices inside $R$, but for technical reasons (the last of
the rectangles $R'_i$ defined below might be further to the right than the point $(t,0)$ or possibly
be outside of the square)  of the argument we extend the Poisson point process of our probability
space to the semi-infinite strip $R_{\infty}=[0,\infty)\times[0,\alpha]$.

We construct a sequence of vertices in a similar way as in the proof of Proposition~\ref{ap:prop:lb}.
Set $v_0=u$, $x_0=0$ and $a_0=0$. We make the choice of $\alpha$ for this subsection now more precise. We set
\[
\alpha= B \delta^{1/2}r^{1/3},
\]
for some constant $0<B<1/\sqrt{F+1}$ satisfying~\eqref{ap:eq:ineB}. Observe that the restriction $\delta\leq F r^{4/3}$ implies that
\begin{equation}\label{ap:eq:bound_alpha}
\alpha \le (B \sqrt F) r < r,
\end{equation}
so our choice of $\alpha$ is feasible, and moreover
\begin{equation}\label{ap:eq:bound_rho}
\rho=r-\alpha^2/r \ge (1-B^2F)r.
\end{equation}
For each $1\le i\le k-1$, define $R'_i=(x_{i-1}+a_{i-1},x_{i-1}+\rho]\times[0,\alpha]$, and let
$v_i$ be the vertex with largest $x$-coordinate inside $R'_i$ (if $R'_i$ is empty, then add an \emph{extra} vertex
$v_i=(x_{i-1}+a_{i-1},0)$). Define $x_i$ to be the $x$-coordinate of $v_i$ and
$a_i=x_{i-1}+\rho-x_i$. By the same considerations as in the proof of Proposition~\ref{ap:prop:lb} (but
replacing $R_i$ by $R'_i$, $r$ by $\rho$, and $k$ by $k-1$), we deduce that $a_1,a_2,\ldots,a_{k-1}$
are stochastically dominated by a sequence $\tilde a_1,\tilde a_2,\ldots,\tilde a_{k-1}$ of i.i.d.\
exponentially distributed random variables of parameter $\alpha$, such that $a_i=\min\{\tilde a_i,
\rho-a_{i-1}\}\le\tilde a_i$ for all $1\le i\le k-1$.
Moreover, since $\alpha \rho/2 \ge (1-B^2F)\alpha r/2 = (1-B^2F)B \delta^{1/2} r^{4/3}/2$, we have
$$
\Pr(\tilde a_i >\rho/2) = e^{-\alpha \rho/2} \le e^{-(1-B^2F)B \delta^{1/2} r^{4/3}/2}\;.
$$

Thus, with probability at least $1-ke^{-(1-B^2F)B \delta^{1/2} r^{4/3}/2}$,  for every $1\le i\le k-1$ we have
$\tilde a_i\le \rho/2$, and therefore $a_i=\tilde a_i$. This event implies that
\begin{equation}\label{ap:abs:boundb}
x_{k-1} = (k-1)\rho - \tilde a,
\end{equation}
where $\tilde a=\sum_{i=1}^{k-1} \tilde a_i$, and also that we did not add any extra vertices (i.e.\
all $v_1,\ldots,v_{k-1}$ belong to the Poisson point process in $[0,\infty)\times[0,\alpha]$).

By construction, each $x_i$ belongs to the rectangle $R'_i\subseteq R_{v_{i-1}}$ for every $1\le
i\le k-1$.
Hence, by Lemma~\ref{ap:lem:rectangle}, the vertices $v_0,v_1,\ldots,v_{k-1}$ form a connected path.

In view of all that, it suffices to show that $x_{k-1}+\rho \ge t$ with sufficiently large probability.
Note that, if this event holds, then $v$ must belong to $R_{v_j}$ for some $0 \leq j
\leq k-1$,  and therefore $u=v_0,v_1,v_2\ldots,v_j,v$ is a connected path of length $j+1\le k$. (Observe that such
a path is contained in $R$, so our extension of the Poisson point
process to $R_\infty$ turned out to be harmless.)

Recall that $C=1/J^{3/2}$. Using the upper-tail bound in Lemma~\ref{ap:lem:concentration} we obtain
$$
\Pr\left(\tilde a\ge (1+C\delta^{3/2})\frac{k}{2\alpha}\right) <
\left(\frac{1+C\delta^{3/2}}{e^{C\delta^{3/2}}}\right)^k  \leq e^{-kg\left((\delta/J)^{3/2}\right)}.
$$
Combining this together with~\eqref{ap:abs:boundb}, we infer that, with probability at least
\[
1-k e^{-(1-B^2F)B \delta^{1/2} r^{4/3}/2} - e^{-kg\left((\delta/J)^{3/2}\right)},
\]
we have
\begin{equation}\label{ap:abs:boundc}
x_{k-1}+\rho =
k\rho - \tilde a
> k\rho - \frac{\left(1+C\delta^{3/2}\right)k}{2\alpha} =
kr\Big(1 - \frac{\alpha^2}{r^2} - \frac{(1+C\delta^{3/2})}{2\alpha r}\Big).
\end{equation}
{From} the definition of $k$, the range of $\delta$ and since $\alpha=B\delta^{1/2}r^{1/3}$, the event above implies
\begin{align*}
x_{k-1}+\rho &>  t(1+\delta r^{-4/3})   \bigg(1 - \delta r^{-4/3}\Big(B^2+\frac{(\delta^{-3/2}+C)}{2B}\Big) \bigg)
\\
&\geq   t(1+\delta r^{-4/3})   \bigg(1 - \delta r^{-4/3}\Big(B^2+\frac{C}{B}\Big) \bigg)
\\
&=  t \Bigg[1 + \delta r^{-4/3}   \bigg(1  - \big(\delta r^{-4/3}+1\big)\Big(B^2+\frac{C}{B}\Big) \bigg) \Bigg]
\\
&\ge  t \Bigg[1 + \delta r^{-4/3}   \bigg(1  - (F+1)\Big(B^2+\frac{C}{B}\Big) \bigg) \Bigg]
>  t,
\end{align*}
so
\begin{align}
\Pr(d_G(u,v) > k) &\le k e^{-(1-B^2F)B \delta^{1/2} r^{4/3}/2} + e^{-kg\left((\delta/J)^{3/2}\right)}
\notag\\
&\le n e^{-(1-B^2F)B \delta^{1/2} r^{4/3}/2} + e^{-g\left((\delta/J)^{3/2}\right)t/r}
\label{ap:eq:xav1}\\
&\le n e^{-(1-B^2(F+1))B \delta^{1/2} r^{4/3}/2} + e^{-g\left((\delta/J)^{3/2}\right)t/r}
\notag\\
&\le n e^{-C(F+1) \delta^{1/2} r^{4/3}/2} + e^{-g\left((\delta/J)^{3/2}\right)t/r},
\label{ap:eq:xav2}
\end{align}
as desired. On the last step we used the fact that $(1 - B^2(F+1))B \ge C(F+1)$, which easily follows from~\eqref{ap:eq:ineB}. This completes the proof of the proposition.
Note that~\eqref{ap:eq:xav1} may be stronger than~\eqref{ap:eq:xav2} if we choose a constant $B$ which satisfies~\eqref{ap:eq:ineB} and maximises $(1-B^2F)B$.
\end{proof}
\begin{proposition}\label{ap:prop:new_upperbound}
Let $\gamma$ as in the statement of Theorem~\ref{ap:thm:main}, and let $\RGR$ be a random geometric graph on $R$, with $u=(0,0)$ and $v=(t,0)$. Suppose that $r\ge 70\sqrt{\log n}$.
Then we have
\[
d_G(u,v) > \left\lceil \frac{t}{r}\left(1+70\gamma (r t)^{-2/3} \right)\right\rceil,
\]
with probability at most $o(n^{-5/2})$.
\end{proposition}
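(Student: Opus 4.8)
The plan is to deduce this directly from Proposition~\ref{ap:alternative}, applied with a fixed choice of the absolute constants $F$ and $J$ and with $\delta$ taken to be a suitably large bounded multiple of $\gamma$. First I would dispose of the trivial range $t\le r$: in that case $u$ and $v$ are joined by an edge, so $\dg(u,v)=1$, while the ceiling on the right-hand side is at least $1$; hence the event whose probability we want to bound is empty and there is nothing to prove. From now on assume $t>r$.

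Next, fix constants $F>0$ and $J>\tfrac{3(F+1)}{2^{2/3}}$ (small absolute numbers, to be tuned at the end), and set $\delta=c\gamma$ for a constant $c$ large enough that $\delta\ge J$ (possible because $\gamma\ge 300^{2/3}$) and small enough that $\tfrac{t}{r}\bigl(1+\delta r^{-4/3}\bigr)$ does not exceed the bound claimed in the statement; then $\{\dg(u,v)>\lceil\text{claimed bound}\rceil\}\subseteq\{\dg(u,v)>\lceil\tfrac{t}{r}(1+\delta r^{-4/3})\rceil\}$ and it suffices to bound the latter. The hypothesis $\delta\le Fr^{4/3}$ of Proposition~\ref{ap:alternative} has to be verified: using $\tfrac{r}{r+t}\le1$ one gets $31\bigl(\tfrac{2r\log n}{r+t}\bigr)^{2/3}=O((\log n)^{2/3})$, and since $r\ge70\sqrt{\log n}$ this term, the term $\tfrac{70\log^2 n}{r^{8/3}}$, and the constant $300^{2/3}$ are each $O(r^{4/3})$ with a small implied constant, so $\delta=O(\gamma)=O(r^{4/3})$ and the hypothesis holds once $n$ is large (this is where the admissible window for $c$ relative to $F$ is narrowest). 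Proposition~\ref{ap:alternative} then bounds the probability in question by
\[
n\exp\!\left(-\frac{(F+1)\delta^{1/2}r^{4/3}}{2J^{3/2}}\right)+\exp\!\left(-g\bigl((\delta/J)^{3/2}\bigr)\frac{t}{r}\right),\qquad g(x)=x-\log(1+x),
\]
and it remains to show both summands are $o(n^{-5/2})$.

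For the first summand I would use the middle term in the definition of $\gamma$: since $\gamma\ge\tfrac{70\log^2 n}{r^{8/3}}$ we have $\delta^{1/2}r^{4/3}\ge\sqrt{70c}\,\log n$, so the exponent is at least $\tfrac{(F+1)\sqrt{70c}}{2J^{3/2}}\log n$; choosing the constants so that this coefficient exceeds $\tfrac72$ makes the first summand $n^{-5/2-\Omega(1)}=o(n^{-5/2})$ (one may instead use the sharper bound~\eqref{ap:eq:xav1}, optimizing $(1-B^2F)B$ over the admissible range of $B$, which widens the room for the constants). For the second summand I would use the first term of $\gamma$: $\gamma\ge 31\bigl(\tfrac{2r\log n}{r+t}\bigr)^{2/3}$ yields $(\delta/J)^{3/2}=\Omega\bigl(\tfrac{r\log n}{r+t}\bigr)$, while $\gamma\ge 300^{2/3}$ forces the argument of $g$ to be at least a large absolute constant, so $g(x)\ge x/2$ there; hence $g\bigl((\delta/J)^{3/2}\bigr)\tfrac{t}{r}=\Omega\bigl(\tfrac{r\log n}{r+t}\cdot\tfrac{t}{r}\bigr)=\Omega\bigl(\tfrac{t}{r+t}\log n\bigr)$, which is $\Omega(\log n)$ since $t>r$, and with the constants chosen generously it is at least $3\log n$, so the second summand is $o(n^{-5/2})$ as well. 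No de-Poissonization is needed at this stage — we have worked throughout with $\RGR$ — that reduction comes afterwards.

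The main obstacle is the simultaneous calibration of the constants $F$, $J$, the multiple $c$ (and the free constant $B$ inside Proposition~\ref{ap:alternative}): the constraint $J>\tfrac{3(F+1)}{2^{2/3}}$ pulls against the inequality $\tfrac{(F+1)\sqrt{70c}}{2J^{3/2}}>\tfrac72$ needed for the first summand, while $\delta\le Fr^{4/3}$ caps how large $c$ may be relative to $F$, so all the requirements are only jointly satisfiable in a rather thin window. This is presumably exactly why the numbers $31$, $70$ and $300^{2/3}$ appear in the definition of $\gamma$: the $31(\cdot)^{2/3}$ piece is tuned to control the $g$-summand, the $\tfrac{70\log^2 n}{r^{8/3}}$ piece to control the $n\exp(\cdot)$ summand, and the constant $300^{2/3}$ to guarantee that $g$ is always evaluated in the regime where $g(x)\ge x/2$. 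Everything else is routine arithmetic.
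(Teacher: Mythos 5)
Your overall strategy coincides with the paper's: apply Proposition~\ref{ap:alternative} with $\delta$ tied to $\gamma$, check $J\le\delta\le Fr^{4/3}$, and use the three branches of the maximum defining $\gamma$ to control, respectively, the $n\exp(\cdot)$ summand, the $g$-summand, and the regime $g(x)\ge x/2$; you have correctly identified the role of each of the constants $31$, $70$ and $300^{2/3}$. (For the record, the exponent $70\gamma(rt)^{-2/3}$ in the statement appears to be a misprint: the paper's own proof, and the use of this proposition in Corollary~\ref{ap:cor:statement2}, establish the bound with $\gamma r^{-4/3}$, i.e.\ the paper simply takes $\delta=\gamma$.)

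There are, however, two concrete gaps in the calibration you defer to ``tuning''. First, your reduction step requires $\frac{t}{r}(1+c\gamma r^{-4/3})\le\frac{t}{r}(1+70\gamma(rt)^{-2/3})$, i.e.\ $c\le 70(r/t)^{2/3}$; since $t$ may be as large as $\sqrt{2n}$, no fixed constant $c>0$ achieves this, and for the (intended) $\gamma r^{-4/3}$ version you are forced down to $c\le 1$. Second, with $c\le 1$ your primary route through~\eqref{ap:eq:xav2} cannot close: the hypothesis $J>\frac{3(F+1)}{2^{2/3}}$ forces $\frac{F+1}{2J^{3/2}}<\frac{1}{3^{3/2}\sqrt{F+1}}\le 3^{-3/2}\approx 0.19$, so the coefficient of $\log n$ in the first exponent is at most $0.19\sqrt{70c}\le 1.62$, far below the $7/2$ needed to beat the prefactor $n$ and land at $o(n^{-5/2})$; making $c$ large enough ($c>4.7$) to fix this contradicts the first point. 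So the sharper bound~\eqref{ap:eq:xav1}, which you mention only parenthetically, is not an optional refinement but the essential step: the paper takes $B=47/50$, $F=23/200$, giving $(1-B^2F)B\approx 0.845$ and exponent $\approx\frac{0.845\sqrt{70}}{2}\log n>\frac{7.01}{2}\log n$, and even this clears the bar only barely (which is exactly the ``thin window'' you anticipated). Your treatment of the second summand and of the verification $\delta\le Fr^{4/3}$ is essentially sound and matches the paper's.
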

\begin{proof}

First, observe that, if $t\le r$, then $d_G(u,v)=1$, and the statement holds trivially. Thus, we assume henceforth that $t>r$.

Set $B=47/50$, $C=10^{-2}$, $F=23/200$, $D=70$, $E=31$ and $J=10^{-4/3}$.
Recall that
$$
\gamma=\max\left\{E\left(\frac{2r\log{n}}{r+t}\right)^{2/3},D\frac{\log^2{n} }{r^{8/3}}, 3^{2/3}J\right\}\;.
$$
We want to apply Proposition~\ref{ap:alternative} with $\delta=\gamma$. It is straightforward to check that the restrictions~\eqref{ap:eq:ineB} and $J>\frac{3(F+1)}{2^{2/3}}$, required in Proposition~\ref{ap:alternative} hold. We also need to show that $J\le \gamma \leq F r^{4/3}$. Notice that $\frac{D\log^{2}{n}}{r^{8/3}} \leq Fr^{4/3}$, since $r \ge 70\sqrt{\log{n}} \ge (D/F)^{1/4}\sqrt{\log{n}}$; also $E \left(\frac{2r\log{n}}{r+t}\right)^{2/3}\le  Fr^{4/3}$, since $r(r+t)/(2\log n) >r^2/\log n \ge 4900 \ge (E/F)^{3/2}$; and finally $ 3^{2/3}J \leq F r^{4/3}$ since $r=\Omega(\sqrt{n})$. Moreover, $\delta \geq 3^{2/3}J \geq J$.

%%
%\xavcom{{\Large *********** Gros Probl\`eme!! Gro\ss es Problem!!} It may happen that $\delta=o(1)$, and then we cannot apply Proposition~\ref{ap:alternative}. For instance, set $r=\log^{9/10}n=\log^{0.9}n$ and $t=\log^{21/10}n=\log^{2.1}n$. Then $\gamma=\max\{\log^{2/3}n,\log^{2/5}n\}=\log^{2/3}n$ and $\delta=D\log^{-2/15}n=o(1)$.}
%%
Note that this choice of constants combined with~\eqref{ap:eq:bound_alpha} and~\eqref{ap:eq:bound_rho} implies
\begin{equation}\label{ap:eq:bound_alpha_rho}
\alpha \le r/3
\qquad\text{and}\qquad
\rho \ge 8r/9 \ge 8\alpha/3.
\end{equation}
The proof concludes by applying~\eqref{ap:eq:xav1} in the proof of Proposition~\ref{ap:alternative} with this given $\delta$, showing that the upper bound on $\Pr(d_G(u,v)>k)$ is $o(n^{-5/2})$.
On the one hand, $\delta\geq \frac{D\log^{2}{n}}{r^{8/3}}$ implies
\[
\frac{(1-B^2F)B\delta^{1/2}r^{4/3}}{2}-\log n \ge \frac{(1-B^2F)BD^{1/2}\log n}{2}-\log n >  \frac{7.01}{2}\log{n} - \log{n} = \frac{5.01}{2}\log{n}.
\]
On the other hand, $\delta\geq E(r\log{n}/t)^{2/3}$  and  $\delta \geq 3^{2/3}J$ imply

%\xavcom{Li'l Problem here: in the following inequalities I was using the fact that $\delta\to\infty$ to bound $g(x)>x/2$ (we could even do $g(x)>0.99x$ for that matter). However, this seems not to be always the case as stated in the previous comment. This only affects constants a little bit here, but in any case, if we don't solve the previous problem, we can toss everything to the garbage.}

\[
\frac{g\left((\delta/J)^{3/2}\right)t}{r} >
\frac{(\delta/J)^{3/2}t}{2r} \ge
\frac{3}{2}CE^{3/2} \log n >
\frac{5.17}{2}\log{n},
\]
where we have used that $g(x)\geq x/2$ if $x\geq 3$.

Therefore, $\Pr(d_G(u,v)>k) \le n^{-5.01/2} + n^{-5.17/2} = o(n^{-5/2})$.
\end{proof}
\begin{corollary}\label{ap:cor:statement2}
 Statement (ii) in Theorem~\ref{ap:thm:main} is true.
\end{corollary}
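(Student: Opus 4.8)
The plan is to derive statement (ii) from Proposition~\ref{ap:prop:new_upperbound}, which already proves the required inequality for the normalized configuration $u=(0,0)$, $v=(t,0)$ in the auxiliary model $\RGR$; what is left is to transfer it to $\RG$ and to make it hold for every pair of vertices at once. First I would normalize: given two points $p,q\in\SR$, set $t=\de(p,q)$ and apply a rotation followed by a translation sending $p\mapsto(0,0)$ and $q\mapsto(t,0)$. This carries $\SR$ to a congruent square and preserves both the law of the random geometric graph and all graph distances, so we may assume $u=(0,0)$, $v=(t,0)$. If $t\le r$ then $\dg(u,v)=1$ and the bound holds trivially, since $\lceil\frac{t}{r}(1+\gamma r^{-4/3})\rceil\ge 1$; otherwise $t>r$, which is exactly the hypothesis of Proposition~\ref{ap:prop:new_upperbound}.

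Next, let $\alpha=B\gamma^{1/2}r^{1/3}\le r/3$ be the strip width from Propositions~\ref{ap:alternative} and~\ref{ap:prop:new_upperbound}. After possibly reflecting in the $x$-axis, the rectangle $R=[0,t]\times[0,\alpha]$ is contained in $\SR$ as long as $u$ and $v$ both lie at Euclidean distance at least $\alpha$ from $\partial\SR$: the segment $\overline{uv}$ lies in $\SR$ by convexity, the distance-to-boundary function is concave along it and hence at least $\alpha$ there, and $R$ is contained in the $\alpha$-neighborhood of $\overline{uv}$. When $R\subseteq\SR$, the points of the Poisson process of $\RGuv$ lying in $R$, together with $u$ and $v$, have exactly the distribution of $\RGR$ and induce the same edge set, so $\RGR$ is an induced subgraph of $\RGuv$ and $\dg(u,v)$ is no larger in $\RGuv$ than in $\RGR$. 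Hence, by Proposition~\ref{ap:prop:new_upperbound},
$$
\Pr\left(\dg(u,v)>\left\lceil \frac{t}{r}\left(1+\gamma r^{-4/3}\right) \right\rceil\right)=o(n^{-5/2})
$$
in $\RGuv$. De-Poissonizing as in the beginning of Section~\ref{ap:sec:main} (using $\Pr(N=n-2)=\Theta(1/\sqrt n)$), the same bound holds with failure probability $o(n^{-2})$ for the pair of labelled vertices of $G\in\RG$ occupying $p,q$; since the estimate does not depend on $p$ and $q$, it holds for every placement, hence for the random positions of any prescribed pair of vertices of $G$.

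Finally, summing the $o(n^{-2})$ failure probability over the at most $n^2$ pairs of vertices shows that a.a.s.\ every pair satisfies $\dg(u,v)\le\lceil\frac{\de(u,v)}{r}(1+\gamma r^{-4/3})\rceil$, which is statement (ii), and hence the corollary. The genuinely delicate point, which I expect to be the main obstacle, is the case set aside above: when $u$ or $v$ lies within $\alpha\le r/3$ of $\partial\SR$ (in particular near a corner of $\SR$), no translate of the thin rectangle $R$ fits inside $\SR$. I would handle this by first replacing such an endpoint with a nearby vertex at distance $\ge\alpha$ from $\partial\SR$: any region of area $\gg\log n$ contains a vertex with probability $1-o(n^{-3})$ (this is where $r\ge 70\sqrt{\log n}$ enters), and a short chain of such regions along a ray toward the interior connects the endpoint to such a vertex in $O(1)$ edges; one then applies the argument above to the two shifted endpoints. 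The subtle part is to organize this detour so that essentially no edge is wasted --- moving along the segment $uv$ rather than across it whenever possible --- so that the extra steps and the small decrease of $\de(u,v)$ are absorbed by the ceiling and by the error term $\gamma r^{-4/3}$. Everything else is routine bookkeeping on top of Propositions~\ref{ap:alternative} and~\ref{ap:prop:new_upperbound}.
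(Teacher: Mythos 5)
Your interior case (both endpoints at Euclidean distance at least $\alpha$ from $\partial\SR$), the de-Poissonization step and the final union bound over the $n^2$ pairs are all fine and match the paper. The gap is exactly the case you set aside: when $u$ or $v$ lies within $\alpha$ of $\partial\SR$, which is the actual content of this corollary, your detour sketch does not go through and cannot be repaired as stated. The budget is $\lceil \frac{t}{r}(1+\gamma r^{-4/3})\rceil$ edges; for $u$ in a corner of $\SR$ and $t=1.5r$, say, this equals $2$, so not a single wasted edge is affordable. More generally, for $t=\Theta(r)$ the total Euclidean slack permitted by the error term is $\gamma r^{-4/3}t=O(\log^{2/3}n\cdot r^{-1/3})$, which is \emph{smaller} than the $\Theta(\sqrt{\log n})$ of progress you lose by routing through a prescribed region of area $\Theta(\log n)$ (these two quantities differ by a factor $r^{1/3}/\log^{1/6}n\ge 70^{1/3}$). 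So the ``organize the detour so that essentially no edge is wasted'' step is not bookkeeping; it is the hard part, and it is left unproved.

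The paper closes this case without any detour, by two observations you would need. First, the path built in Proposition~\ref{ap:alternative} can be arranged, with probability $1-o(n^{-5/2})$, to have all its \emph{internal} vertices in the trimmed rectangle $R'=[1.01\alpha,\,t-1.01\alpha]\times[0,\alpha]$: indeed $x_1\ge\rho/2>4\alpha/3$ by~\eqref{ap:eq:bound_alpha_rho}, and if the last internal vertex falls in $[t-1.01\alpha,t]\times[0,\alpha]$ it can be swapped for a vertex of $[t-1.01\alpha-r/2,\,t-1.01\alpha]\times[0,\alpha]$. Second, a deterministic geometric lemma: for \emph{any} $u,v\in\SR$, at least one of the two $\alpha\times t$ rectangles flanking the segment $uv$ contains a congruent copy of $R'$ lying inside $\SR$. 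This is proved by cutting off the four corner triangles with the half-planes $x\le x_u$, $x\ge x_v$, $y\le y_u$, $y\ge y_v$: with the angle $\beta\in[0,\pi/4]$ of $\vec{uv}$, two of them have $uv$-parallel side $\alpha|\tan\beta|\le1.01\alpha$ and are therefore disjoint from $R'$ (``safe''), and if neither flanking rectangle had its remaining triangle inside $\SR$ one would get $|y_u|,|y_v|\ge\sqrt n/2-\alpha$ and hence $\beta>\pi/4$, a contradiction. Without this lemma (or a fully quantified version of your detour), the proof of the corollary is incomplete.
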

 \begin{proof}
 Observe that from the proof of Proposition~\ref{ap:alternative} together with~\eqref{ap:eq:bound_alpha_rho}, $x_1\ge \rho/2 > 4\alpha/3$ with
probability at least $1-o(n^{-5/2})$. In particular, this event implies that $v_1$ is outside of the
square $[0,1.01\alpha] \times [0,\alpha]$.
Moreover, also with probability $1-o(n^{-5/2})$, we can find some point $\hat v_j$ in
$[t-1.01\alpha-r/2,t-1.01\alpha]\times[0,\alpha]$.
It may happen that $v_j$ lies in $[t-1.01\alpha,t]\times[0,\alpha]$. However, in that case, we can
replace $v_j$ with $\hat v_j$, and therefore we found a $u$--$v$ path of length $j+1\le k$ with all
internal vertices in $[1.01\alpha,t-1.01\alpha]\times[0,\alpha]$.
Indeed, we will show now that we can always fit such a rectangle
$R'=[1.01\alpha,t-1.01\alpha]\times[0,\alpha]$, suitably rotated and translated, into the square. We
need first a few definitions.

Consider two points $u=(x_u,y_u)$ and $v=(x_v,y_v)$ in $\mathbb{R}^2$. By symmetry we may assume that $x_u < x_v$ and  $y_u
\leq y_v$. Let $\beta$ be the angle of the vector $\vec{uv}$ with respect to the horizontal
axis. Again by symmetry, we may consider $\beta\in [0,\pi/4]$.

We consider now two rectangles of dimensions $\alpha \times t$ placed on each side of the segment
$uv$. Let $R^+$ be the rectangle to the left of $\vec{uv}$, and let $R^-$ be the rectangle to the right of
$\vec{uv}$. We will show that at least one of these rectangles contains a copy of $R'$ fully contained in $\SR$.

Notice that the intersection of $R^+$ and $R^-$ with each of the halfplanes $x \leq
x_u$, $x \geq x_v$, $y \leq y_u$ and $y \geq y_v$ gives $4$ triangles. We call them $T_u^+$, $T_v^-$, $T_u^- $ and $T_v^+$ respectively. All
these triangles are right-angled, and denote by $t_u^+$, $t_v^-$, $t_u^-$ and $t_v^+$ the side of the corresponding triangle that it is
parallel to the segment $uv$. Notice that  $|t_u^+|=|t_v^-|$ and $|t_u^-|=|t_v^+|$. Call a triangle $T_w^*$, with $w \in \{u,v\}$ and $* \in \{+,-\}$,
\emph{safe} if $|t_w^*| \leq 1.01\alpha$.  Note that if $T_u^+$ and $T_v^+$ are safe or fully
contained in the square, then $R^+$ contains the desired rectangle $R$, and analogously for $R^-$.

Since we assumed that $\beta \leq \pi/4$, we have  $|t_u^+| = |t_v^-|= \alpha |\tan \beta| \leq 1.01 \alpha$. Thus, $T_u^+$ and $T_v^-$ are safe.
If $y_u=y_v$, that is $\beta=0$, it is clear that either $R^+$ or $R^-$ contain the desired copy of $R'$. Thus, we may assume that $\beta>0$.

We can also assume that both $u$ and $v$ are on the boundary of $\SR$, as otherwise we extend the
line segment $uv$ to the boundary of the square, and the original rectangles are contained in the
new ones.

Recall that $T_u^+$ and $T_v^-$ are safe.
If $y_v\leq \sqrt{n}/2 -\alpha$, then $T_v^+$ is completely contained in the
square, and hence $R^+$ satisfies the conditions. Similarly, if $y_u\geq -\sqrt{n}/2+\alpha$, $R^-$ satisfies the conditions.
Otherwise, $|y_u|,|y_v|\geq \sqrt{n}/2 -\alpha$, and the angle $\beta$ is at least $\arctan \left(\frac{\sqrt{n}-2\alpha}{\sqrt{n}}\right)>\pi/4$, which contradicts our assumption on $\beta$.

Again, by de-Poissonizing $\RGuv$, we can use Proposition~\ref{ap:prop:new_upperbound} to show that for given $u$ and $v$ in $G\in\RG$,
statement (ii) in Theorem~\ref{ap:thm:main} holds with probability at least $1-o(n^{-2})$. By taking a union bound over all at most $n^2$ possible pairs of vertices, statement (ii) in Theorem~\ref{ap:thm:main} follows.
\end{proof}
\section{Open problems}

Theorem~\ref{ap:thm:main} establishes a relation between the graph distance and the Euclidean distance of two vertices $u$ and $v$ in $\RG$ that holds a.a.s.\ simultaneously for all pairs of vertices.

It would be interesting to find better concentration bounds on the values that $d_G(u,v)$ can take with high probability. Also, we would like to characterize the probability distributions of $\E(d_G(u,v)\mid d_E(u,v))$ and ${\rm Var}(d_G(u,v)\mid d_E(u,v))$ (i.e.~the expectation and variance of $d_G(u,v)$ given $d_E(u,v)$).
What can we say about these distributions?

In the proof of statement~(ii) in Theorem~\ref{ap:thm:main}, we define a new random variable that stochastically dominates $d_G(u,v)$ and we give an upper bound for the probability that this random variable is too large. This argument can be easily adapted in the case $r=\omega(r_c)$, and provide the upper bound
$\E(d_G(u,v)\mid d_E(u,v)) - d_E(u,v)/r= O\left(1+ \gamma d_E(u,v)r^{-7/3} \right)$.
Similarly, the proof of statement~(i) in Theorem~\ref{ap:thm:main} can be adapted to give a lower bound on $\E(d_G(u,v)) \mid d_E(u,v))$, but we need the further conditioning upon the event that $d_E(u,v)$ is large enough.


\begin{thebibliography}{AGE02}
\bibliographystyle{alpha}

\bibitem{Bradonjic}
M.~Bradonjic, R.~Els\"{a}sser, T.~Friedrich, T.~Sauerwald and  A.~Stauffer,
\newblock Efficient broadcast on random geometric graphs.
\newblock {\em Proceedings of the Twenty-First Annual ACM-SIAM Symposium on Discrete Algorithms}, pp. 1412--1421. Society for Industrial and Applied Mathematics, 2010.
%
\bibitem{Ellis}
R.~B.~Ellis, J.~L.~Martin and C.~Yan,
\newblock  Random geometric graph diameter in the unit ball
\newblock {\em Algorithmica}, 47(4), pp. 421--438, 2007.
%
\bibitem{Gilbert}
E.N.~Gilbert,
\newblock Random Plane Networks,
\newblock  {\em  J. Soc. Industrial Applied Mathematics}, 9(5), pp. 533--543, 1961.
%
\bibitem{Goel05}
A.~Goel, S.~Rai and B.~Krishnamachari,
\newblock Sharp thresholds for monotone properties in random geometric graphs,
\newblock  {\em  Annals of Applied Probability}, 15, pp. 364--370, 2005.
 %
\bibitem{GrimmettS}
G.~Grimmett and D.~Stirzaker,
\newblock {\em Probability and Random Processes (3rd Edition)},
 \newblock Oxford U. P., 2001.
%
\bibitem{Friedrich}
T.~Friedrich, T.~Sauerwald and A.~Stauffer,
\newblock Diameter and broadcast time of random geometric graphs in arbitrary dimensions,
\newblock {\em Algorithms and Computation} pp. 190--199,  Springer Berlin Heidelberg, 2011.
%
\bibitem{Abbas}
A.~Merhabian and N.~Wormald,
\newblock On the stretch factor of randomly embedded random graphs,
\newblock  http://arxiv.org/pdf/1205.6252v1.pdf, 2013.
%
\bibitem{Muthukrishnan}
S.~Muthukrishnan and G.~Pandurangan,
\newblock The bin-covering technique for thresholding random geometric graph properties,
\newblock {\em Proceedings of the sixteenth annual ACM-SIAM symposium on Discrete algorithms}, pp. 989--998. Society for Industrial and Applied Mathematics, 2005.
%
\bibitem{Penrose97}
M.~Penrose,
\newblock The longest edge of the random minimal spanning tree,
\newblock  {\em  Annals of Applied Probability}, 7(2), pp. 340--361, 1997.
%
\bibitem{Penrose}
M.~Penrose,
\newblock {\em Random Geometric Graphs},
\newblock Oxford Studies in Probability. Oxford U.P., 2003.
\end{thebibliography}
\end{document}